\documentclass[submission,copyright,creativecommons]{eptcs}



\usepackage{amsmath,amsfonts,amsthm}
\usepackage{tikz}
\usepackage{tikz-cd}
\usepackage{todonotes}
\DeclareMathAlphabet{\mathcal}{OMS}{cmsy}{m}{n} 


\theoremstyle{plain}
\newtheorem{thm}{Theorem}[section]
\newtheorem{cor}[thm]{Corollary}

\newtheorem{lma}[thm]{Lemma}

\theoremstyle{definition}
\newtheorem{defn}[thm]{Definition}
\newtheorem{exmp}[thm]{Example}

\newtheorem*{lma*}{Lemma}

\theoremstyle{remark}
\newtheorem*{rmrk*}{Remark}

\newcommand{\hide}[1]{}

\newcommand{\Cs}{                   
    \mathbf{C}^\ast
    }
    
\newcommand{\MeasPower}[2]{         
    #1^#2                           
    }

\newcommand{\iid}[1]{               
    \mathrm{iid}_{#1}
    }

\newcommand{\borel}[1]{
    \mathit{Borel} \left( #1 \right) 
    }

\newcommand{\Tr}[1]{
    \mathrm{Tr}\left( #1 \right)
    }
\newcommand{\innerprod}[2]{
    \left\langle #1 \, \middle\vert\, #2 \right\rangle
    }
\newcommand{\outerprod}[2]{
    \left\vert #1 \middle\rangle \middle\langle #2 \right\vert
    }

\newcommand{\sym}[1]{               
    \mathcal S_#1
    }

    
\newcommand{\op}[1]{
    \left( #1 \right)^{\mathrm{op}}
    }

\newcommand{\Set}{
    \mathbf{Set}
    }
\newcommand{\CH}{
    \mathbf{CH}
    }
\newcommand{\Top}{
    \mathbf{Top}
    }

\newcommand{\ConCom}{
    \mathbf{ConvCH}
    }

\newcommand{\Iinj}{
    \mathbf{I}_{\mathrm{inj}}
    }


\newcommand{\CstMIU}{
    \Cs_{\mathrm{Mult}}
    }
\newcommand{\CCstMIU}{
    \mathbf{c}\Cs_{\mathrm{Mult}}
    }
\newcommand{\CstPU}{
    \Cs_{\mathrm{Pos}}
    }
\newcommand{\CCstPU}{
    \mathbf{c}\Cs_{\mathrm{Pos}}
    }
\newcommand{\CstCPU}{
    \Cs_{\mathrm{CPU}}
    }
    

\newcommand{\kl}[1]{                   
    \mathcal{K}l \left( #1 \right)
    }
\newcommand{\klto}{
    \rightsquigarrow%
    }
\newcommand{\Alg}[1]{                  
    \mathcal{E}m \left( #1 \right)
    }
    

\newcommand{\radu}{                 
    \mathcal{R}
    }
\newcommand{\rad}[1]{
    \radu \left( #1 \right)
    }
\newcommand{\radsqr}[1]{
    \radu^2 \left( #1 \right)
    }

\newcommand{\statu}{                
    S
    }
\newcommand{\stat}[1]{
    \statu \left( #1 \right)
    }
\newcommand{\sstatu}{               
    \mathcal I
    }
\newcommand{\sstat}[1]{
    \sstatu \left( #1 \right)
    }

\newcommand{\cont}[1]{
    C \left( #1 \right)
    }


\newcommand{\CStA}{\mathcal{A}}
\newcommand{\CStB}{\mathcal{B}}  
\newcommand{\CStElA}{a}  


\newcommand{\spotimes}{
\mathop{\otimes_{\mathrm{min}}}    }
\newcommand{\Xon}[1]{
    \CStA^{\otimes #1}
    }
\newcommand{\Xoinf}{
    \CStA^{\otimes \infty}
    }
    
%


\title{Quantum de Finetti Theorems as Categorical Limits, \\ and Limits of State Spaces of C*-algebras}
\author{Sam Staton \qquad \qquad Ned Summers
\institute{Department of Computer Science,\\
University of Oxford\\
Oxford, United Kingdom}
}

\begin{document}

\maketitle


\begin{abstract}
    De Finetti theorems tell us that if we expect the likelihood of outcomes to be independent of their order, then these sequences of outcomes could be equivalently generated by drawing an experiment at random from a distribution, and repeating it over and over. In particular, the quantum de Finetti theorem says that exchangeable sequences of quantum states are always represented by distributions over a single state produced over and over. The main result of this paper is that this quantum de Finetti construction has a universal property as a categorical limit. This allows us to pass canonically between categorical treatments of finite dimensional quantum theory and the infinite dimensional. The treatment here is through understanding properties of (co)limits with respect to the contravariant functor which takes a C*-algebra describing a physical system to its convex, compact space of states, and through discussion of the Radon probability monad. We also show that the same categorical analysis also justifies a continuous de Finetti theorem for classical probability.
\end{abstract}
\section{Introduction}
\newcommand{\HH}{\mathcal{H}}
\newcommand{\KK}{\mathcal{K}}
\newcommand{\CC}{\mathbb{C}}
\newcommand{\BB}{\mathcal{B}}
\newcommand{\QDF}{\mathit{QdF}}
\usetikzlibrary{calc}
\tikzset{curve/.style={settings={#1},to path={(\tikztostart)
    .. controls ($(\tikztostart)!\pv{pos}!(\tikztotarget)!\pv{height}!270:(\tikztotarget)$)
    and ($(\tikztostart)!1-\pv{pos}!(\tikztotarget)!\pv{height}!270:(\tikztotarget)$)
    .. (\tikztotarget)\tikztonodes}},
    settings/.code={\tikzset{quiver/.cd,#1}
        \def\pv##1{\pgfkeysvalueof{/tikz/quiver/##1}}},
    quiver/.cd,pos/.initial=0.35,height/.initial=0}

The quantum analogue of de Finetti's theorem~\cite{CavesFuchsSchack2001,Hudson1976, Hudson1981, Stormer1969} explains that a ``belief about a quantum state'' has a more elementary description as an exchangeable sequences of quantum states. The point of this paper is to show that this de Finetti theorem can be phrased in categorical terms. Thus we connect this theorem, which is a fundamental theorem of quantum Bayesianism (e.g.~\cite{FuchsSchack2004}), with categorical and compositional approaches to axiomatizations and reconstructions of quantum theory (e.g.~\cite{CoeckeFritzSpekkens2016,HeunenKissingerSelinger2014,Huot2018,Parzygnat2020,Selby2021,Tull2020,Wetering2019}).

Let $\HH $ be a Hilbert space (e.g.~$\CC^2$).
An \emph{sequence} of states on $\HH$ is a collection of quantum states on tensor powers of $\HH$:
\[\text{a state $\rho_0$ for $\HH^{\otimes 0}=\CC $,\ \ 
a state $\rho_1$ for $\HH^{\otimes 1}=\HH$,\ \ 
a state $\rho_2$ for $\HH^{\otimes 2}=\HH \otimes \HH $,\ \ 
a state $\rho_3$ for $\HH^{\otimes 3}, \ \ $\dots}\] 
For example, if $\HH=\CC^2$, then a sequence of states on $\HH$ is a sequence of density matrices in $\CC^{(2^n)^2}$.
A sequence is \emph{exchangeable} if each state commutes with reindexing, e.g.~$\rho_2=\rho_2 \circ \mathrm{swap}$ for the swap map $\HH \otimes \HH \to \HH\otimes\HH$, and if taking the partial trace of $\rho_m$ over any $m-n$ indices gives $\rho_n$,  for $n \leq m$. 
We can phrase this in categorical terms by recalling that a state is a quantum channel $\CC\to \HH$
(i.e.~a CPTP map between the corresponding spaces of density matrices), and so an exchangeable sequence is a commuting cone of quantum channels
\newcommand{\tikzqdf}[2]{
          #1^{\otimes 0} \arrow[loop,distance=1.5em,out=120, in=60]
      \& #1^{\otimes 1}\arrow[loop,distance=1.5em,out=120, in=60] \ar[l]
      \& #1^{\otimes 2}
      \ar[l,shift left = 1.5]
      \ar[l,shift right = 1.5]
      \arrow[loop,distance=1.5em,out=120, in=60]
      \arrow[loop,distance=1.5em,out=-120, in=-70]
      \& #1^{\otimes 3}
      \ar[l,shift left = 2.5]
      \ar[l,shift left = 1.5]
      \ar[l,shift left = 0.5]
      \ar[l,shift right = 0.5]
      \ar[l,shift right = 1.5]
      \ar[l,shift right = 2.5]
      \arrow[loop,distance=1.5em,out=140, in=125]
      \arrow[loop,distance=1.5em,out=115, in=100]      
      \arrow[loop,distance=1.5em,out=90, in=75]      
      \arrow[loop,distance=1.5em,out=65, in=50]      
      \arrow[loop,distance=1.5em,out=-130, in=-110]
      \arrow[loop,distance=1.5em,out=-100, in=-80]      
      \&\cdots 
     \ar[l,shift left = 2.5,"\vdots" {yshift=7mm}]
     \ar[l,shift right = 2.5]
     \\\\ \& \& \& \& \& #2
    \ar[llllluu,curve={height=-30pt}] 
    \ar[lllluu,curve={height=-25pt}] 
    \ar[llluu,curve={height=-20pt}] 
    \ar["\dots"',lluu,curve={height=-15pt}] 
}

\begin{equation}\label{dgm:qdfintro} 
  \begin{tikzcd}[ampersand replacement=\&,row sep=small]
    \tikzqdf{\HH}{\CC}
    \end{tikzcd}
     \end{equation}
Our categorical statement of the quantum de Finetti theorem (Theorem~\ref{thm:qdf}, dgm.~\eqref{dgm:introqlimitmed}) is about a limit for this diagram, i.e.~a universal exchangeable sequence of quantum channels. This follows recent categorical treatments of the classical case~\cite{FritzGondaPerrone2021,JacobsStaton2020}.

To give this categorical statement precisely, we make three steps.
\begin{enumerate}
    \item Firstly, we extend our set up to allow for channels that have both quantum and classical information. Formally, this is done by recalling that the dual category $\op\CstCPU$ of C*-algebras and completely positive unital maps fully embeds the category of quantum channels, but also fully embeds a good deal of classical probability, in terms of Radon probability kernels between compact Hausdorff spaces.
    \begin{equation}\label{eqn:intro:cstar}
    \begin{tikzcd}[row sep=tiny]
        \text{(quantum channels)}\ar[r,hook]&\op\CstCPU&\ar[l,hook']\text{(classical probability kernels)}
    \end{tikzcd}
    \end{equation}
    This move is important. It turns out that there is no Hilbert space that is the limit of diagram~\ref{dgm:qdfintro}. Instead, some classical probability is necessary.
    
    \item Then, rather than look only at exchangeable sequences of states $\CC\to \HH^{\otimes n}$, we look more generally at \emph{parameterized} exchangeable sequences, i.e.~sequences of channels $\KK\to\HH^{\otimes n}$, incorporating both classical and quantum randomness.
    \begin{equation}\label{dgm:catqdfintro}
    \begin{tikzcd}[ampersand replacement=\&,row sep=tiny]
        \tikzqdf{\HH}{\KK}
    \end{tikzcd}
    \end{equation}
    
    \item \textbf{Theorem~\ref{thm:qdf}} (paraphrased): There is a $\Cs$-algebra $\QDF(\HH)$ and a cone
    \begin{equation}\label{dgm:introqlimit} \begin{tikzcd}[ampersand replacement=\&,row sep=tiny]
        \tikzqdf{\HH}{\QDF(\HH)}
    \end{tikzcd}\end{equation}
    which is limiting in the category $\op\CstCPU$.\\
    This universal property says that any cone factors uniquely through $\QDF(\HH)$.
    \begin{equation}\label{dgm:introqlimitmed}  \begin{tikzcd}[ampersand replacement=\&,row sep=small]
       \tikzqdf{\HH}{\QDF(\HH)}
       \\\&\&\&\KK
       \arrow[to=1-1,from=4-4,in=-60,out=180,distance=1em]
       \arrow[to=1-2,from=4-4,crossing over]
       \arrow[to=1-3,from=4-4, crossing over]
       \arrow[to=1-4,from=4-4,crossing over]
       \arrow[to=3-6,from=4-4,dashed]
    \end{tikzcd}\end{equation}
    In other words, to give a cone, i.e.~a parameterised exchangeable sequence of states in $\HH$, is equivalent to giving a channel of quantum and classical information to $\QDF(\HH)$.
\end{enumerate}

The starting point for our proof of Theorem~\ref{thm:qdf} is St\o rmer's quantum de Finetti result~\cite{Stormer1969}.
Inspired by St\o rmer's result, we take the candidate limiting cone $\QDF(\HH)$ to be the $\Cs$-algebra corresponding to the space of classical distributions on \emph{all} states of $\HH$.
Although St\o rmer's work is not phrased in categorical terms at all, it follows from that result that there is a unique mediating morphism in diagram~\eqref{dgm:introqlimitmed} in the case where $\KK=\CC$.
To show that this is indeed is a categorical limit, we follow the following steps.
\begin{itemize}
    \item We note that the candidate limiting cone $\QDF(\HH)$ is classical, that is to say, it lies on the right hand, classical side of \eqref{eqn:intro:cstar}, even though the diagram itself typically lies on the left hand, quantum side of \eqref{eqn:intro:cstar}. So we can consider the categorical limit and diagram~\eqref{dgm:introqlimitmed} in the category of $\Cs$-algebras and \emph{positive} unital maps since positive maps into and out of commutative $\Cs$-algebras are necessarily \emph{completely} positive.
    \item The category of C*-algebras and positive unital maps dually embeds into a category of compact spaces with convex structure, by regarding their states (\S\ref{sec:prelim}, Thm.~\ref{thm: State Space Functor is Full and Faithful}; \cite{Furber2015}). This category of convex compact  spaces is categorically well-behaved, since it is the category of algebras for a monad on another well-behaved category. In this larger category we are able to use standard monadicity results to translate `pointwise' limiting structure to categorical limiting structure (\S\ref{sec:lemmas}, Theorem~\ref{thm: CCLcvx Pointwise Limit}). 
\end{itemize}
In particular, we can then show that diagram~\eqref{dgm:introqlimit} is a categorical limit (\S\ref{sec:qdf}, Theorem~\ref{thm:qdf}).

In this way, we can understand ``belief about a quantum state'' in categorical terms. This opens the door to using categorical diagrammatic notation, which we illustrate in Section~\ref{sec:illustration}. 

\section{Preliminaries}\label{sec:prelim}
We recall rudiments of probability theory (\S\ref{sec:probthy}) and $\Cs$-algebras (\S\ref{sec:cstar},\ref{sec:tensors}). In this context we recall classical and quantum de Finetti theorems (\S\ref{sec:classical-df},\ref{sec:prelim-qdf}) and the $\Cs$-algebraic treatment of probability  (\S\ref{sec:probgelfand},\ref{sec:states}).

            \subsection{Rudiments of Probability Theory}
            \label{sec:probthy}
            We begin by recalling some measure-theoretic probability theory. 
            
            \begin{defn}[Probability Measure]\label{defn:probmeasure}
            For a set $X$, \textit{a $\sigma$-algebra for $X$} is a collection of subsets of $X$, $\Sigma_X\subseteq \mathcal P (X)$, which contains $X$ and is closed under countable unions and complements.
            
            \textit{A measurable space} is a pair $(X,\Sigma_X)$ of a set $X$ and a $\sigma$-algebra $\Sigma_X$ on $X$. 
            
            In what follows, we are almost exclusively concerned with measures on topological spaces. The \textit{Borel $\sigma$-algebra $\borel X$ on a topological space $X$} is the smallest $\sigma$-algebra generated by the open sets of $X$. 
            Additionally, when we refer to finite or countable sets, we consider them as measurable spaces with the $\sigma$-algebra of all possible subsets.

            \textit{A measurable function} between measurable spaces $(X,\Sigma_X)$ and $(Y,\Sigma_Y)$ is a function $f\colon X \to Y$ such that for any $S \in \Sigma_Y$, we have $f^{-1}\left( S\right) \in \Sigma_X$. Continuous functions between topological spaces $(X, \borel{X}) \to (Y, \borel{Y})$ are measurable, though not all measurable functions are continuous. 
            
            \textit{A probability measure on a measurable space $(X, \Sigma_X)$} is a function $\mu \colon \Sigma_X \to [0,1]$ such that $\mu(X) = 1$ and for a disjoint countable collection of sets $\{U_i\}_{i \in N}\subset\Sigma_X$, $\mu \left(\bigcup_{i \in \mathbb N} U_i \right) = \sum_{i\in \mathbb N} \mu(U_i)$. Given a probability measure $\mu \colon \Sigma_X \to [0,1]$ and a measurable function $f\colon X \to Y$, the \textit{pushforward of $\mu$ by $f$}, a probability measure on $Y$, is denoted $f_\ast \mu$ and given by $f_\ast \mu(S) := \mu \left( f^{-1}(S) \right)$ for measurable $S \in \Sigma_Y$.
            
            All our topological spaces will be compact, Hausdorff spaces, so we will only use measures that behave well with the compactness of the space. Let $X$ be a compact Hausdorff space. \textit{A Radon probability measure on $X$}, $\mu \colon \borel{X} \to [0,1]$, is a probability measure on the Borel $\sigma$-algebra which is \textit{inner regular}: for any measurable $S \subseteq X$, $\mu(S) = \sup_{K \subseteq S} \mu (K)$ where $K$ varies over all compact subsets. For any continuous function $f\colon X\to \CC$, a Radon probability measure $\mu$ induces an integral
            $\int_{x\in X} f(x) \,\mathrm{d}\mu\in \CC$, so that we can regard $\mu$ as a map from the set (in fact, space) of continuous functions on $X$ to $\CC$ (see Thm.~\ref{thm: Stochastic Gelfand Duality},  \S\ref{sec:states}).
            \end{defn}
            
        \subsection{Kolmogorov Extension Theorem and Hewitt-Savage de Finetti Theorem}
            \label{sec:classical-df}
            Kolmogorov's extension theorem connects measures on infinite product spaces with measures on finite truncations. Recall that for a set of topological spaces $\{ X_i\}_{i\in I}$, their product has underlying set $\prod_{i \in I} X_i$ and has topology generated by the cylinder sets $V[U_{i_1},\dots,U_{i_n}] = \left\{ (x_i)_{i\in I} \in \prod_{i \in I} X_i \, \middle\vert \, x_{i_k} \in U_{i_k} \text{ for } 1 \leq k \leq n\right\}$, varying over all finite subsets $\{ i_1, \dots, i_n \} \subset I$ and open sets $U_{i_k} \subset X_{i_k}$.
            
            \begin{thm}[Kolmogorov Extension Theorem (e.g. \cite{Okada1978})] \label{thm: Kolmogorov Extension Theorem}
                Let $X$ be a compact Hausdorff space. Let $X^\mathbb N$ be the countable product of copies of $X$. For each finite $N \subset \mathbb N$, let $\mu_N$ be a Radon probability measure on $X^N$ with the product topology, with the condition that if we take finite subsets $M \subset N \subset \mathbb N$, $\mu_M$ is the pushforward of $\mu_N$ by the projection $X^N \to X^M$. Then there exists a unique probability measure $\mu$ on $X^\mathbb N$ such that, for any finite $N \subset \mathbb N$, $\mu_N$ is the pushforward of $\mu$ by the projection $X^\mathbb N \to X^N$. Further, this measure is itself Radon.
            \end{thm}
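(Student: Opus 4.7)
The plan is to use the Riesz representation theorem, which identifies Radon probability measures on a compact Hausdorff space $Y$ with positive unital linear functionals on $C(Y)$, and to build $\mu$ by first defining such a functional on $C(X^{\mathbb N})$. Note $X^{\mathbb N}$ is compact Hausdorff by Tychonoff's theorem. Call a continuous function $f\colon X^{\mathbb N}\to\mathbb C$ a \emph{cylinder function} if $f = g\circ \pi_N$ for some finite $N\subset \mathbb N$ and some $g\in C(X^N)$, where $\pi_N\colon X^{\mathbb N}\to X^N$ is the projection. Cylinder functions form a unital $\ast$-subalgebra of $C(X^{\mathbb N})$ that separates points, since two distinct sequences must disagree in some coordinate, so by the Stone--Weierstrass theorem they are uniformly dense in $C(X^{\mathbb N})$.

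I would then define a linear functional $\Lambda$ on cylinder functions by $\Lambda(g\circ \pi_N) := \int_{X^N} g\, d\mu_N$. Well-definedness is the crux: if $g\circ \pi_N = g'\circ \pi_{N'}$ are two representations of the same function, pick any finite $N''$ containing both $N$ and $N'$, so that the cylinder also lifts via $\pi_{N''}$ to a single $h\in C(X^{N''})$ with $h = g\circ \pi^{N''}_N = g'\circ \pi^{N''}_{N'}$. The consistency hypotheses $(\pi^{N''}_N)_\ast \mu_{N''} = \mu_N$ and $(\pi^{N''}_{N'})_\ast \mu_{N''} = \mu_{N'}$ then give $\int g\, d\mu_N = \int h\, d\mu_{N''} = \int g'\, d\mu_{N'}$. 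The functional $\Lambda$ is manifestly unital and positive and satisfies $|\Lambda(f)|\leq \|f\|_\infty$, so by density it extends uniquely to a positive unital functional on $C(X^{\mathbb N})$. Applying Riesz supplies a unique Radon probability measure $\mu$ representing this functional.

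To check marginals, for any finite $N\subset\mathbb N$ and any $g\in C(X^N)$ one has $\int g\, d(\pi_{N\ast}\mu) = \int (g\circ\pi_N)\, d\mu = \Lambda(g\circ\pi_N) = \int g\, d\mu_N$, so by uniqueness in Riesz, $\pi_{N\ast}\mu = \mu_N$. Uniqueness of $\mu$ itself follows the same way: any other Radon probability measure $\mu'$ with the stated marginals integrates every cylinder function to $\Lambda(f)$, hence agrees with $\mu$ on a uniformly dense subspace of $C(X^{\mathbb N})$, so $\mu' = \mu$.

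The main obstacle is showing $\Lambda$ is well-defined on cylinder functions; this is precisely where the Kolmogorov consistency condition does its work. Past that point, Stone--Weierstrass density and the Riesz correspondence make the rest formal. In the categorical language the paper will develop, this proof amounts to observing that $C(X^{\mathbb N})$ is the colimit of the $C(X^N)$ in the category of $C^\ast$-algebras, so a consistent family of states assembles into a unique state on the colimit --- a classical foreshadowing of the limit-of-state-spaces viewpoint underlying the main quantum de Finetti theorem.
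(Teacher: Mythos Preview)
Your argument is correct and is a standard, clean route to the compact-Hausdorff Kolmogorov extension theorem: Stone--Weierstrass density of cylinder functions, consistency to make the functional well-defined, then Riesz--Markov. Note, however, that the paper does not actually prove this statement; it is quoted as a classical result with a citation, so there is no proof in the paper to compare against.

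That said, your closing categorical gloss---that $C(X^{\mathbb N})$ is the colimit of the $C(X^N)$ and a consistent family of states assembles into a state on the colimit---is exactly the perspective the paper takes for the \emph{quantum} Kolmogorov extension theorem (Theorem~\ref{thm:catqkolmogorov}), where $\Xoinf$ is the colimit of the $\Xon{n}$ and the state-space functor carries this to a limit of state spaces. So while your proof supplies detail the paper deliberately omits for the classical case, its structure anticipates precisely the mechanism the paper formalises in Section~\ref{sec:lemmas} and Theorem~\ref{thm:catqkolmogorov}.
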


            From here we can now define an exchangeable measure.
            
            \begin{defn}[Exchangeable Measure]\label{defn: Exchangeable Measure}
            Let $X$ be a measurable space. Let $\mu$ be a measure on $X^\mathbb N$. For each permutation $\sigma \colon \mathbb N \to \mathbb N$, there is an isomorphism 
            \begin{align*}
                \eta_\sigma \colon X^\mathbb N &\to X^\mathbb N \qquad
            \eta_\sigma(x_1,x_2, x_3, \dots) = (x_{\sigma^{-1}(1)},x_{\sigma^{-1}(2)}, x_{\sigma^{-1}(3)}, \dots).
            \end{align*} $\mu$ is called \textit{exchangeable} if, for every permutation $\sigma\colon \mathbb N \to \mathbb N$ which fixes all but a finite number of elements, we have that $(\eta_\sigma)_\ast \mu = \mu$. 
            \end{defn}  
        
            
            Let $X$ be a compact Hausdorff space. We define $\rad{X}$ to be the set of all
            Radon probability measures on the Borel $\sigma$-algebra on $X$, made into a compact Hausdorff space with the topology generated by the open sets $\{\mu \in \rad{X} \colon \int_{x\in X}f(x)\,\mathrm{d}\mu \in U\}$ for $U \subseteq \CC$ open, $f\colon X\to \CC$ continuous. 
            
            Given a Radon measure $\mu$ on $X$, by Kolmogorov extension, there is a Radon measure $\Tilde{\mu}$ on $X^\mathbb N$ defined on the basis of cylinder set opens by $\Tilde{\mu} \left(V\left[U_{n_1}, \dots, U_{n_k} \right] \right) = \prod_{i\in \mathbb N} \mu\left(U_{n_i}\right)$ for $U_{n_i} \subset X$ open. This is because for all $n \in \mathbb N$ there is a unique Radon measure on $X^n$, denoted by $\MeasPower{\mu}{n}$, which has $\MeasPower{\mu}{n}(U_1 \times \dots \times U_n) = \prod_{i\in \mathbb N} \mu\left(U_{i}\right)$ for $U_i \in \borel{X}$.
            
            \begin{thm}[Hewitt-Savage de Finetti Theorem \cite{Hewitt1955}] \label{thm: Hewitt Savage De Finetti Theorem}
                Let $\mu \in \rad{X^{\mathbb N}}$ be an exchangeable Radon probability measure on the countably infinite product of copies of $X$, with the Borel  $\sigma$-algebra. Then there exists a Radon probability measure $\nu$ on $\rad{X}$ (i.e.~$\nu\in \rad{\rad X}$) such that, for all measurable $U \subseteq X^{\mathbb N}$,
                $$
                \mu(U) = \int_{p \in \rad{X}} \tilde{p}(U) \, \mathrm{d} \nu.
                $$
            \end{thm}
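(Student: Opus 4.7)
The plan is to construct $\nu$ as a weak limit of pushforwards of the finite marginals of $\mu$ along empirical-measure maps, and then identify $\mu$ with the mixture $\int \tilde p\,\mathrm{d}\nu$ via the uniqueness clause in Kolmogorov's extension theorem (Theorem~\ref{thm: Kolmogorov Extension Theorem}).

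First, for each $n\in\mathbb N$, let $\mu_n\in\rad{X^n}$ be the pushforward of $\mu$ along the projection $X^{\mathbb N}\to X^n$; exchangeability of $\mu$ makes each $\mu_n$ invariant under the $\mathcal{S}_n$-action permuting coordinates, and the family $\{\mu_n\}_n$ is compatible under further projections. Introduce the empirical-distribution map $e_n\colon X^n\to\rad X$ sending $(x_1,\dots,x_n)$ to $\tfrac{1}{n}\sum_{i=1}^n\delta_{x_i}$, which is continuous for the topology on $\rad X$ recalled above, and set $\nu_n:=(e_n)_\ast\mu_n\in\rad{\rad X}$. Since $X$ is compact Hausdorff, so are $\rad X$ and $\rad{\rad X}$, so I extract a subnet $\nu_{n_k}$ converging weakly to some $\nu\in\rad{\rad X}$, the candidate de Finetti measure.

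The crux is then to verify that for every $m$ and every $f\in\cont{X^m}$,
\[
\int_{X^m} f\,\mathrm{d}\mu_m \;=\; \int_{\rad X}\!\left(\int_{X^m} f\,\mathrm{d} \MeasPower{p}{m}\right)\mathrm{d}\nu(p).
\]
I plan to obtain this through a Diaconis--Freedman-style finite de Finetti estimate: averaging over injections $[m]\hookrightarrow[n]$ and comparing sampling without replacement to sampling with replacement gives, for $n\geq m$,
\[
\left|\int_{X^n}(f\circ\pi_m^n)\,\mathrm{d}\mu_n - \int_{\rad X}\!\left(\int_{X^m} f\,\mathrm{d} \MeasPower{p}{m}\right)\mathrm{d}\nu_n(p)\right| \;\leq\; \frac{C\,m^2}{n}\|f\|_\infty.
\]
The left-hand integral equals $\int f\,\mathrm{d}\mu_m$ by marginal compatibility, and taking $n=n_k\to\infty$ along the convergent subnet, together with continuity of $p\mapsto\int f\,\mathrm{d}\MeasPower{p}{m}$ on $\rad X$, yields the displayed identity.

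Finally, the candidate marginals $\tilde\mu_m:=\int_{\rad X}\MeasPower{p}{m}\,\mathrm{d}\nu(p)$ agree with $\mu_m$ by the previous step, so Kolmogorov extension produces a unique Radon measure on $X^{\mathbb N}$ with these marginals that on one hand must be $\mu$ and on the other is visibly $\int\tilde p\,\mathrm{d}\nu$, giving the desired equality on every Borel set. I expect the main obstacle to be the quantitative estimate above in the generality of Radon measures on an arbitrary compact Hausdorff $X$, where one cannot simply appeal to explicit combinatorics as in the finite case; a safer alternative route is Stone--Weierstrass on the compact space $X^n/\mathcal{S}_n$, reducing the check to ``polynomial'' symmetric test functions of the form $\sum_{\sigma\in\mathcal{S}_n}\prod_{i=1}^m f_i(x_{\sigma(i)})$, whose $\mu_n$-integrals are directly expressible as moments of $\nu_n$ and therefore behave well under $\nu_{n_k}\to\nu$.
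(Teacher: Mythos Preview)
The paper does not give its own proof of this statement. Theorem~\ref{thm: Hewitt Savage De Finetti Theorem} sits in the preliminaries and is simply cited from Hewitt--Savage \cite{Hewitt1955}; no argument is supplied. What the paper \emph{does} prove is the categorical reformulation (Theorem~\ref{thm: Categorical Hewitt-Savage De Finetti Theorem}), and there the proof is one line: instantiate the quantum de Finetti limit (Theorem~\ref{thm:qdf}) at the commutative $\Cs$-algebra $\cont{X}$ and use $\cont{X}\spotimes\cont{Y}\cong\cont{X\times Y}$. That in turn rests on St\o rmer's quantum de Finetti theorem (Theorem~\ref{thm: Quantum De Finetti Theorem}), which is again imported as a black box. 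So in the paper's logical flow the classical Hewitt--Savage theorem is not proved at all; if anything, it is \emph{recovered} as a special case of the quantum result.

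Your proposal is therefore a genuinely different route: a direct, self-contained classical-probability argument via empirical measures, weak compactness of $\rad{\rad X}$, and a Diaconis--Freedman finite-exchangeability estimate. The sketch is sound in outline and is a standard way to prove the Hewitt--Savage theorem; the quantitative sampling-without-replacement bound you quote is correct for bounded test functions on general compact Hausdorff $X$ (it is a purely combinatorial inequality at the level of indices), and your fallback via Stone--Weierstrass on symmetric polynomials is also workable. One small point worth tightening: you only extract a sub\emph{net}, so you should either note that the limit identity forces every subnet limit of $(\nu_n)$ to agree (hence the full sequence converges), or simply argue uniqueness of $\nu$ a posteriori from the identity $\int f\,\mathrm{d}\mu_m=\int\!\int f\,\mathrm{d}\MeasPower{p}{m}\,\mathrm{d}\nu(p)$, since these ``moments'' separate points of $\rad{\rad X}$ by Stone--Weierstrass. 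But none of this has a counterpart in the paper, which treats the theorem as background and approaches the de Finetti phenomenon through $\Cs$-algebras and St\o rmer's result rather than through empirical measures.
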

            \begin{exmp}\label{cor:DeFinCoin}
                Supposing that we are modelling coin flips, so $X = \{ H, T \}$, then $ \rad X \cong [0,1]$, and the de Finetti result says that an exchangeable distribution on sequences of $H$ and $T$ can only come from picking from bag of coins with bias distributed according to some distribution on $[0,1]$ and then flipping the coin you have picked over and over forever.
            \end{exmp}
             
    \subsection{Rudiments of $\Cs$-algebras and Gelfand duality}\label{sec:cstar}
        \begin{defn}[C*-Algebra]
            An \emph{algebra} $V$ (over $\CC$) is a vector space $V$ over $\CC$ equipped with a binary operation of \emph{multiplication}, $\cdot \colon V \times V \to V$, which is bilinear. 
            If this multiplication is commutative, then $V$ is a \emph{commutative algebra}.
            We will assume that all algebras are \emph{unital}, i.e.~have a multiplicative unit.
            
            A \emph{Banach algebra} is an algebra $V$ equipped with a norm $\| \cdot \|$ such that $V$ is complete with respect to $\| \cdot \|$ and for all $x,y\in V$, $\| x\cdot y \| \leq \| x \| \| y \|$.
        %
            A \emph{$^\ast$-algebra} is a algebra $V$ that is equipped with an involution: a function $(-)^\ast \colon V \to V$ that is is self-inverse, a multiplication antihomomorphism (i.e. it reverses multiplication) and is conjugate linear.
            
            A bounded linear map between $^\ast$-algebras which preserves multiplication, the unit and involution is called a \emph{$^\ast$-homomorphism}.
        %
        \label{def: Cstar Algebra}
            A \emph{$\Cs$-algebra} $\CStA$ is a Banach $^\ast$-algebra such that for all $x \in \CStA$,
            $
            \|x^\ast x \| = \|x\|^2
            $.
            
            We write $\CstMIU$ for the category which has as its objects $\Cs$-algebras and $^\ast$-homomorphisms as its morphisms. It has as a full subcategory $\CCstMIU$ of commutative $\Cs$-algebras.
        \end{defn}
        
        \begin{exmp}
        For any Hilbert space $\HH$ over $\CC$, we denote the space of all bounded linear operators $\phi \colon \HH \to \HH$ by $\BB(\HH)$. $\BB(\HH)$ is the prototypical example of a $\Cs$-algebra. The generally non-commutative multiplication is given by composition of operators, the unit is the identity map and involution is taking the adjoint of a map. The norm is the operator norm.
        \end{exmp}

        \begin{thm}[e.g.~\cite{Landsman2017}, C.12]\label{thm:gelfandniamark}
        Every $\Cs$-algebra is isomorphic to a sub-algebra of $\BB(\HH)$ for some~$\HH$.
        \end{thm}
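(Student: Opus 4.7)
The plan is to prove this via the Gelfand-Naimark-Segal (GNS) construction, which produces a faithful $^\ast$-representation of any $\Cs$-algebra on a Hilbert space.

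First, I would introduce \emph{states} on $\CStA$: positive linear functionals $\omega \colon \CStA \to \CC$ with $\omega(1) = 1$, where positivity means $\omega(a^\ast a) \geq 0$ for every $a \in \CStA$. For each such $\omega$, the sesquilinear form $\innerprod{a}{b}_\omega := \omega(a^\ast b)$ is positive semi-definite. The Cauchy-Schwarz inequality forces $N_\omega := \{a \in \CStA : \omega(a^\ast a) = 0\}$ to be a (closed) left ideal, so $\CStA / N_\omega$ carries a genuine inner product; its completion is a Hilbert space $\HH_\omega$. Left multiplication by $a \in \CStA$ descends to the quotient and, because the Banach algebra structure gives $\|ab\| \leq \|a\|\|b\|$, extends by continuity to a bounded operator $\pi_\omega(a) \in \BB(\HH_\omega)$. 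A routine check (using that $\omega$ is a $^\ast$-linear functional) shows $\pi_\omega$ is a $^\ast$-homomorphism.

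Second, I would produce enough states to separate points. For a fixed nonzero $a \in \CStA$, consider the abelian sub-$\Cs$-algebra $B$ generated by $a^\ast a$. Commutative Gelfand duality (which can be proved independently of Theorem~\ref{thm:gelfandniamark}) identifies $B$ with $C(X)$ for some compact Hausdorff $X$, so evaluation at a point where $|a^\ast a|$ attains its sup norm gives a state $\omega_0$ on $B$ with $\omega_0(a^\ast a) = \|a^\ast a\| = \|a\|^2$. By Hahn-Banach, $\omega_0$ extends to a bounded linear functional $\omega$ on $\CStA$ of the same norm $1 = \omega_0(1)$; a standard lemma then shows that any norm-one linear functional on a $\Cs$-algebra attaining the value $1$ at the unit is automatically positive, so $\omega$ is a state. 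In particular $\pi_\omega(a) \neq 0$.

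Finally, I would assemble the direct sum $\pi := \bigoplus_\omega \pi_\omega$ over all states, giving a $^\ast$-homomorphism $\pi \colon \CStA \to \BB(\HH)$ with $\HH := \bigoplus_\omega \HH_\omega$. By the preceding paragraph $\pi$ is injective, and a standard lemma (any injective $^\ast$-homomorphism between $\Cs$-algebras preserves the spectrum of self-adjoint elements, hence preserves norms via the $\Cs$-identity $\|x\|^2 = \|x^\ast x\| = r(x^\ast x)$) shows $\pi$ is automatically isometric. Thus $\CStA$ is isomorphic to its image $\pi(\CStA)$, a $^\ast$-subalgebra of $\BB(\HH)$. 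I expect the principal obstacle to be the second step: one must verify that the Hahn-Banach extension of $\omega_0$ really is positive, which hinges on the nontrivial characterisation of states as the norm-one linear functionals sending $1$ to $1$; this in turn requires the commutative case of Gelfand-Naimark, so the overall proof is organised so that the abelian theorem is proved first and bootstrapped into the general statement.
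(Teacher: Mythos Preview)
Your proof via the GNS construction is correct and is the standard argument for the Gelfand--Naimark theorem. However, the paper does not actually prove this statement: it is recorded as background with a citation to \cite{Landsman2017}, C.12, and no proof is given. So there is nothing to compare against; you have supplied a full proof where the paper deliberately omits one, and your argument matches what one would find in the cited reference or any standard text.
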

        
        \begin{exmp}[Commutative $\Cs$-algebras]
            One important example of a $\Cs$-algebra is the space $C(X) = \{ \psi \colon X \to \CC \, \vert \, \psi \text{ is continuous} \}$ for some compact Hausdorff space $X$, equipped with the (topological) supremum norm: $\| f\| = \sup_{x \in X} |f(x)| < \infty$. It is an algebra with multiplication and involution defined pointwise, and is commutative. The unit is the constant map to $1$.
        \end{exmp}
        
        This extends to a duality between commutative $\Cs$-algebras and the category~$\CH$ of compact Hausdorff spaces and continuous maps.
            
        \begin{thm}[Gelfand Duality] \label{thm: Gelfand Duality}
            The functor $\cont{-} = \Top(-,\CC) \colon \CH \to \op{\CCstMIU}$, which acts on morphisms by $\cont{f \colon X \to Y}\colon \phi \mapsto \phi \circ f$, is an equivalence of categories.
        \end{thm}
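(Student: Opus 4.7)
The plan is to construct an explicit quasi-inverse functor $\mathrm{Spec}\colon \op{\CCstMIU} \to \CH$ and exhibit the unit and counit of an equivalence. For a commutative $\Cs$-algebra $\CStA$, let $\mathrm{Spec}(\CStA)$ denote the set of \emph{characters}, i.e.\ nonzero $^\ast$-homomorphisms $\chi\colon \CStA \to \CC$, equipped with the weak-$\ast$ topology inherited from the continuous dual $\CStA^\ast$. Standard Banach-space facts give that characters are automatically contractive, so $\mathrm{Spec}(\CStA)$ sits inside the unit ball of $\CStA^\ast$; the multiplicativity conditions defining a character are weak-$\ast$ closed, so by Banach-Alaoglu the space $\mathrm{Spec}(\CStA)$ is compact Hausdorff. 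On morphisms, a $^\ast$-homomorphism $\phi\colon \CStA \to \CStB$ induces a continuous $\mathrm{Spec}(\phi)\colon \mathrm{Spec}(\CStB) \to \mathrm{Spec}(\CStA)$ by $\chi \mapsto \chi \circ \phi$, giving a contravariant functor.

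Next I would exhibit natural transformations $\eta_X\colon X \to \mathrm{Spec}(\cont{X})$ sending $x$ to the evaluation character $\mathrm{ev}_x$, and $\varepsilon_\CStA\colon \CStA \to \cont{\mathrm{Spec}(\CStA)}$ (the Gelfand transform) sending $a$ to $\widehat{a}$ with $\widehat{a}(\chi)=\chi(a)$. Both are continuous/$^\ast$-algebraic by construction, and naturality is a direct check. For the compact Hausdorff side, $\eta_X$ is injective because compact Hausdorff spaces are completely regular, so points are separated by continuous functions (Urysohn); surjectivity follows from the fact that every maximal ideal of $\cont{X}$ is of the form $\{f : f(x)=0\}$, which one proves by a finite-cover argument: if an ideal $I \subsetneq \cont{X}$ had no common zero, then for each $x \in X$ some $f_x \in I$ satisfies $f_x(x)\neq 0$; by compactness finitely many suffice, and then $\sum_i f_i^\ast f_i \in I$ is strictly positive, hence invertible, contradicting $I\neq \cont{X}$. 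A continuous bijection from compact $X$ to Hausdorff $\mathrm{Spec}(\cont{X})$ is then a homeomorphism.

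The genuinely hard step is showing that the Gelfand transform $\varepsilon_\CStA$ is a $^\ast$-isomorphism. Here I would invoke spectral theory for commutative unital Banach algebras: the range of $\widehat{a}$ is exactly the spectrum of $a$ in $\CStA$, so $\|\widehat{a}\|_\infty$ equals the spectral radius $r(a)$. The $\Cs$-identity $\|a^\ast a\| = \|a\|^2$, combined with $r(b) = \lim \|b^n\|^{1/n}$ applied to the self-adjoint element $a^\ast a$, yields $\|a\|^2 = r(a^\ast a) = \|\widehat{a^\ast a}\|_\infty = \|\widehat{a}\|_\infty^2$, so $\varepsilon_\CStA$ is isometric, hence injective with closed image. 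The image is a unital $^\ast$-subalgebra of $\cont{\mathrm{Spec}(\CStA)}$ that separates points (distinct characters disagree on some $a$, and then on $\widehat{a}$), so Stone-Weierstrass gives density and hence surjectivity.

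The main obstacle is this last argument: isolating the spectral-radius identity for commutative $\Cs$-algebras and combining it correctly with Stone-Weierstrass. Everything else, namely functoriality of $\mathrm{Spec}$, compactness via Banach-Alaoglu, naturality, and the maximal-ideal description of $\eta_X$, is essentially formal given the analytic input. Once $\eta$ and $\varepsilon$ are both natural isomorphisms, contravariance of $\cont{-}$ and $\mathrm{Spec}$ packages them into the asserted equivalence $\CH \simeq \op{\CCstMIU}$.
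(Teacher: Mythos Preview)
Your sketch is the standard textbook proof of Gelfand duality, and it is correct. The paper, however, does not prove this theorem at all: it is placed in the preliminaries section and stated as a classical result without argument, serving only as background for the later probabilistic Gelfand duality (Theorem~\ref{thm: Stochastic Gelfand Duality}) and the state-space functor. So there is nothing in the paper to compare your approach against; you have simply filled in what the authors take for granted.

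For what it is worth, the outline you give is exactly the classical route: construct $\mathrm{Spec}$ via characters, obtain compactness from Banach--Alaoglu, get injectivity of the Gelfand transform from the $\Cs$-identity combined with the spectral-radius formula, and get surjectivity from Stone--Weierstrass. The only minor remark is that in your isometry argument you wrote $\|\widehat{a^\ast a}\|_\infty = \|\widehat{a}\|_\infty^2$; this uses commutativity (so that $\widehat{a^\ast a} = \overline{\widehat{a}}\,\widehat{a} = |\widehat{a}|^2$), which you should perhaps make explicit, but it is of course valid here since $\CStA$ is assumed commutative.
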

          \subsection{Positivity, Probabilistic Gelfand Duality, and the Radon Monad}
            \label{sec:probgelfand}
     
        \begin{defn}[Positivity in $\Cs$-algebras]\label{defn: Positivity in CStar algebras}
            For a $\Cs$-algebra $\CStA$, an element $x \in \CStA$ is called \emph{positive} if there is some $y \in \CStA$ such that $x = y^\ast y$. 
            
            In $\CC$, these are exactly the elements of the non-negative real line $\mathbb R_{0\leq}$. In $\BB(H)$, for some Hilbert space $H$, these are exactly the operators $\phi \colon H \to H$ such that for all $v \in H$, $\left\langle v \, \middle\vert \, \phi v\right\rangle \geq 0$. In $C(X)$, for some compact Hausdorff space $X$, these are the functions whose images lie exclusively in $\mathbb R_{0\leq}$.
            
            
            A linear map between $\Cs$-algebras, $f \colon \CStA_1 \to \CStA_2$, is called \emph{positive} if, for all $x \in \CStA_1$, $f(x^\ast x) \geq 0$. In other words, it takes positive elements of $\CStA_1$ to positive elements of $\CStA_2$. If both the domain and codomain have a unit, the map is called \emph{unital} if it takes the unit of the domain to the unit of the codomain.
            
            We will refer to the category of $\Cs$-algebras with positive, unital maps between them as $\CstPU$. It has $\CstMIU$ as a subcategory.
        \end{defn}
        
        Jacobs and Furber \cite{Furber2015} extended Gelfand duality stochastically: the addition of positive maps which are not multiplicative (i.e. not $^\ast$-homomorphisms) is equivalent to adding stochastic maps to $\CH$.
        \begin{defn}[Radon Monad, e.g.~\cite{Furber2015}]\label{def:radonA}
            Let $X$ be a compact Hausdorff space, and let $\rad{X}$ be the space of all Radon probability measures on $X$ as for Thm.\ref{thm: Hewitt Savage De Finetti Theorem} above. With this topology, $\rad{X}$ is both compact and Hausdorff.
            
            We regard $\radu$ as a monad on the category $\CH$ of compact Hausdorff spaces and continuous maps. The functor part is given by pushforward: let $f \colon X \to Y$ be a morphism in $\CH$; then $\rad{f}(\mu) := f_\ast(\mu)$.
            The unit of the monad takes $x \in X$ to the Dirac measure $\delta_x \in \rad{X}$, the distribution supported entirely at~$x$. The multiplication is a form of marginalization, or averaging:
            \begin{align*}
                \mathit{mult}:\radsqr{X} &\to \rad{X} \qquad
                \mathit{mult}(\phi)(U) := \int_{\mu \in \rad{X}} \mu(U) \, \mathrm{d}\phi(\mu)
            \end{align*}
            
        \end{defn}
        \begin{thm}[Probabilistic Gelfand Duality, \cite{Furber2015}] \label{thm: Stochastic Gelfand Duality}
            The functor $\cont{-} \colon \kl{\radu} \to \op{\CCstPU}$, which acts on morphisms by $\cont{f\colon X \to \rad{Y}} \colon \phi \mapsto \int \phi \, \mathrm{d} f(-)$, is an equivalence of categories between the Kleisli category of the Radon monad and the opposite of the category of commutative $\Cs$-algebras and positive unital maps.
        \end{thm}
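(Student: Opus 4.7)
The plan is to exhibit a bijection on hom-sets that is functorial and essentially surjective on objects, which together establish the claimed equivalence of categories. Essential surjectivity is immediate: both categories have (up to equivalence) commutative $\Cs$-algebras on one side and compact Hausdorff spaces on the other, so this follows from classical Gelfand duality (Theorem~\ref{thm: Gelfand Duality}). The content is therefore concentrated in proving that $\cont{-}$ is fully faithful and that the described action on morphisms is truly functorial with respect to Kleisli composition.

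For full faithfulness, I would fix compact Hausdorff spaces $X,Y$ and set up a bijection between continuous maps $f\colon X\to\rad{Y}$ (morphisms $X\klto Y$ in $\kl\radu$) and positive unital linear maps $\phi\colon \cont{Y}\to\cont{X}$ (morphisms $\cont{Y}\to\cont{X}$, i.e.~$\cont{X}\to\cont{Y}$ in $\op{\CCstPU}$). The forward direction sends $f$ to the map $\phi_f(\psi)(x)\coloneqq \int_{y\in Y}\psi(y)\,\mathrm{d}f(x)$. That $\phi_f(\psi)$ is a continuous function of $x$ is precisely what the topology on $\rad{Y}$ (Defn.~\ref{defn:probmeasure}, Defn.~\ref{def:radonA}) is designed to ensure: it is the coarsest topology making all integration functionals $\mu\mapsto\int\psi\,\mathrm{d}\mu$ continuous. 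Positivity and unitality of $\phi_f$ follow from the same properties of integration against a probability measure. For the inverse direction, given a positive unital $\phi\colon \cont{Y}\to\cont{X}$, for each $x\in X$ the composite $\mathrm{ev}_x\circ\phi\colon \cont{Y}\to\CC$ is a positive unital linear functional, and by the Riesz--Markov--Kakutani representation theorem this corresponds uniquely to a Radon probability measure $\mu_x\in\rad{Y}$. The assignment $x\mapsto\mu_x$ is continuous into $\rad{Y}$ by the same universal property of its topology, because for each $\psi\in\cont{Y}$ the function $x\mapsto \int\psi\,\mathrm{d}\mu_x=\phi(\psi)(x)$ is continuous on $X$. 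The two constructions are mutually inverse, again by uniqueness in Riesz--Markov.

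For functoriality, I need to check that Kleisli composition (using the multiplication of $\radu$ from Defn.~\ref{def:radonA}) is sent to ordinary composition of positive unital maps in the opposite direction. Concretely, given $f\colon X\to\rad{Y}$ and $g\colon Y\to\rad{Z}$, the Kleisli composite is $\mathit{mult}_Z\circ\rad{g}\circ f$, and I must show
\[
  \phi_{\mathit{mult}_Z\circ\rad{g}\circ f}(\psi)(x)
  \;=\; \phi_f\bigl(\phi_g(\psi)\bigr)(x)
  \quad\text{for all } \psi\in\cont{Z},\; x\in X.
\]
Both sides unfold to iterated integrals, and their equality reduces to Fubini/Tonelli together with the explicit formula for $\mathit{mult}$ as averaging. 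Preservation of identities reduces to the fact that the unit $\delta_{(-)}$ of $\radu$ corresponds to the identity $\cont{X}\to\cont{X}$, since $\int\psi\,\mathrm{d}\delta_x=\psi(x)$.

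The main technical obstacle is the back-and-forth between continuous maps into $\rad{Y}$ and positive unital maps on function algebras; this is underwritten by the Riesz--Markov--Kakutani theorem, which is exactly the non-trivial ingredient going beyond the classical Gelfand equivalence. Once this representation is granted, the rest of the argument is a routine verification using the universal property of the topology on $\rad{Y}$ and a Fubini-style rearrangement for the composition law.
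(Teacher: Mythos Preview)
The paper does not give its own proof of this theorem: it is stated as a cited result from \cite{Furber2015}, with no argument in the body of the paper. Your proposal is a correct outline of the standard proof (and is essentially the argument in the cited reference): essential surjectivity from classical Gelfand duality, full faithfulness via the Riesz--Markov--Kakutani representation theorem combined with the weak-$*$ topology on $\rad{Y}$, and functoriality from a Fubini-type unfolding of the monad multiplication. There is nothing to compare against in the paper itself, but your sketch is sound and identifies the right non-trivial ingredient (Riesz--Markov) that separates the probabilistic duality from the classical one.
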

            
    \subsection{States, State Spaces, and Convex Spaces}\label{sec:states}
        A particularly important class of positive, unital maps are states:
        \begin{defn}
        \label{defn: State on a Cstar Algebra}
            Positive, unital maps from a $\Cs$-algebra $\CStA$ to $\CC$ are called \emph{states on $\CStA$}.
        \end{defn}
        
        Under the probabilistic Gelfand duality (Thm.~\ref{thm: Stochastic Gelfand Duality}), for any compact Hausdorff space $X$ we have the correspondence (as sets)
        $$
            \CstPU(C(X), \CC) \ \ \cong\ \  \kl{\radu}\left(\left\{ \ast \right\}, X\right) \ \ = \ \ \CH\left(\left\{ \ast \right\}, \rad{X}\right)\ \  \cong\ \  \rad{X}.
        $$
        In fact, all of the objects in this correspondence have structures as convex, compact, Hausdorff spaces and are isomorphic as such. Thus it is meaningful to consider states on $\Cs$-algebras as a generalisation of classical probability distributions.
        
        \begin{exmp}[Density Matrices are States on a $\Cs$-algebra]\label{ex:density}
            Density matrices in quantum theory are given by operators $\rho \in \BB(\HH)$, for a finite dimensional Hilbert space $\HH$, with $\Tr{\rho} = 1$ such that for all $v \in \HH$, $\left\langle v \vert \rho v \right\rangle \geq 0$. For each such $\rho$, we may define a linear map $s_\rho \colon \BB(\HH) \to \CC$ by $s_\rho(a) = \Tr{\rho a}$. The trace condition says this map is unital. Further, in the finite dimensional case, where we can form a eigenvalue decomposition $\rho = \sum_{i\in I} p_i \outerprod{v_i}{v_i}$, then
            $$
                    s_\rho (a^\ast a) = \Tr{\rho a^\ast a} = \textstyle \sum_i p_i \innerprod{v_i}{a^\ast a v_i} = \sum_i p_i \innerprod{a v_i}{ a v_i} = \sum_i p_i \| a v_i \| \geq 0.
            $$
            So $s_\rho$ is a state on $\BB(\HH)$. In fact, in this finite dimensional case, all states on $\BB(\HH)$ are of this form.
        \end{exmp}
        
        So we can see $\Cs$-algebra states generalise both classical and quantum probability.
        
        \begin{defn}[State Space of a $\Cs$-algebra] \label{defn: State Space}
            Let $\CStA$ be a $\Cs$-algebra. \emph{The state space of $\CStA$}, denoted by $\stat{\CStA}:= \CstPU(\CStA, \CC)$, is the set of all states on $\CStA$ equipped with the coarsest topology such that for all $\CStElA \in \CStA$, the evaluation function $\mathrm{ev}_{\CStElA} \colon \stat{\CStA} \to \CC$ which takes $\rho \mapsto \rho(\CStElA)$ is continuous. The topology is generated by the sets $\mathrm{ev}^{-1}_{\CStElA}(\Omega)$ for $\CStElA \in \CStA$ and $\Omega \subset \CC$ open. $\stat{\CStA}$ has the additional properties of always being Hausdorff and compact.
            Moreover, 
            $\stat{-}$ extends to a functor $\op{\CstPU} \to \CH$ via $\stat{f} = - \circ f$.
        \end{defn}

        In fact $\stat{\CStA}$ also has an obvious convex structure where the convex combination $\lambda \rho_1 + (1-\lambda)\rho_2$ for $\lambda \in [0,1]$ and $\rho_1, \rho_2 \in \stat{\CStA}$ is evaluated pointwise using the addition of $\CC$. We now recall how this convex structure is functorial.
        
        \begin{defn}\label{defn: ConCom}
            For our purposes, a \emph{compact convex space} is a pair $(V,X)$ where $V$ is a convex, compact subset of a locally convex, Hausdorff topological vector space $X$. The category $\ConCom$ has as objects compact convex spaces $(V,X)$, and morphisms are affine, continuous maps between the convex subsets $V$ (only).
        \end{defn}
        
         \begin{thm}[\cite{Furber2015}]\label{thm: State Space Functor is Full and Faithful}
            The state space functor $\statu \colon \op{\CstPU} \to \CH$ factors through $\ConCom$. Moreover the resulting functor $\statu \colon \op{\CstPU} \to \ConCom$ is full and faithful.
        \end{thm}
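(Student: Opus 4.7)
The plan has three stages: verify the factorisation through $\ConCom$, then prove faithfulness, then prove fullness.

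For the factorisation, I would ambient $\stat{\CStA}$ inside the continuous dual $\CStA^{\ast}$ equipped with the weak-$\ast$ topology, which is a locally convex Hausdorff topological vector space. The topology specified in Definition~\ref{defn: State Space} is exactly the restriction of the weak-$\ast$ topology, since its subbasic opens are of the form $\mathrm{ev}_{\CStElA}^{-1}(\Omega)$. Convexity is immediate: pointwise convex combinations of positive unital maps are positive unital. Compactness follows from Banach--Alaoglu once we note that every state has norm~$1$, so $\stat{\CStA}$ lies in the unit ball of $\CStA^{\ast}$, and that it is weak-$\ast$ closed as the intersection of the closed sets $\{\rho : \rho(\CStElA^{\ast}\CStElA) \geq 0\}$ and $\{\rho : \rho(1)=1\}$. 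For a positive unital $f\colon\CStA\to\CStB$, the map $\stat{f}(\rho)=\rho\circ f$ is manifestly affine and weak-$\ast$ continuous, so $\statu$ lands in $\ConCom$.

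For faithfulness, suppose $\stat{f}=\stat{g}$ for positive unital $f,g\colon\CStA\to\CStB$. Then $\rho(f(\CStElA))=\rho(g(\CStElA))$ for every state $\rho$ on $\CStB$ and every $\CStElA\in\CStA$. By a standard consequence of Gelfand--Naimark (Theorem~\ref{thm:gelfandniamark}), states separate points of a $\Cs$-algebra, so $f(\CStElA)=g(\CStElA)$ for all $\CStElA$, hence $f=g$.

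Fullness will be the main obstacle; my plan is to invoke Kadison's representation theorem, which identifies the real vector space of self-adjoint elements $\CStB_{\mathrm{sa}}$ isometrically (and as ordered spaces) with the space of continuous affine real-valued functions on $\stat{\CStB}$ via $b\mapsto\widehat{b}$ where $\widehat{b}(\rho)=\rho(b)$. Given an affine continuous $\phi\colon\stat{\CStB}\to\stat{\CStA}$ and any $\CStElA\in\CStA$, the composite $\rho\mapsto\phi(\rho)(\CStElA)$ is continuous and affine on $\stat{\CStB}$; when $\CStElA$ is self-adjoint it is real-valued, so Kadison gives a unique self-adjoint $f(\CStElA)\in\CStB$ with $\rho(f(\CStElA))=\phi(\rho)(\CStElA)$ for every state $\rho$. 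Extend to all of $\CStA$ by decomposing $\CStElA=\CStElA_1+i\CStElA_2$ into self-adjoint parts and setting $f(\CStElA)=f(\CStElA_1)+i f(\CStElA_2)$. Linearity over $\CStA$ then follows from linearity of Kadison's correspondence together with the fact that states separate points in $\CStB$; the same separation argument shows $f$ preserves multiplication of self-adjoint summands correctly in the sense of being a well-defined complex-linear extension. For positivity, if $\CStElA\geq 0$ in $\CStA$ then $\rho(f(\CStElA))=\phi(\rho)(\CStElA)\geq 0$ for every state $\rho$ on $\CStB$, and since positivity in $\CStB$ is detected by states, $f(\CStElA)\geq 0$. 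Unitality is immediate: $\rho(f(1))=\phi(\rho)(1)=1$ for all $\rho$. Finally, by construction $\stat{f}(\rho)=\rho\circ f=\phi(\rho)$, so $\stat{f}=\phi$. The delicate point throughout is keeping track of the fact that affine continuous maps between state spaces only see self-adjoint information, so Kadison's theorem (rather than any direct C*-algebraic construction) is essential to recover the full complex structure of $f$.
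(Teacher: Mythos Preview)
The paper does not prove this theorem; it is stated with a citation to \cite{Furber2015} and no proof is given in the text. So there is no ``paper's own proof'' to compare against beyond the referenced source.

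Your argument is correct and is essentially the standard one (and matches the approach in the Furber--Jacobs paper you are implicitly reconstructing): embed $\stat{\CStA}$ in the weak-$\ast$ dual for the $\ConCom$ structure, use that states separate points for faithfulness, and invoke Kadison's function representation to recover a positive unital map from an affine continuous map on state spaces for fullness. One phrasing issue: your sentence ``the same separation argument shows $f$ preserves multiplication of self-adjoint summands correctly'' is misleading --- $f$ need not and in general will not be multiplicative (it is only positive unital, not a $^\ast$-homomorphism). What you actually need, and what your argument supplies, is that the complex-linear extension from $\CStA_{\mathrm{sa}}$ to $\CStA$ is well-defined and $\CC$-linear; multiplicativity plays no role. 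With that clarification, the proof is complete.
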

        
        In other words, $\op{\CstPU}$ is isomorphic to a full subcategory of $\ConCom$ (characterized in \cite{Alfsen2001}).
       
        
        \begin{thm}[\cite{Furber2015}]\label{thm:concom-monadic}
            The category of Eilenberg-Moore algebras of the Radon monad, $\Alg{\radu}$ (Def.~\ref{def:radonA}), is equivalent to the category $\ConCom$ (Def.~\ref{defn: ConCom}).
        \end{thm}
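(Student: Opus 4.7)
The plan is to construct explicit, mutually inverse functors $B\colon \ConCom \to \Alg{\radu}$ and $K\colon \Alg{\radu}\to \ConCom$, and to verify that they implement an equivalence of categories. This is essentially the classical Świrszcz-style identification of compact convex subsets of locally convex spaces with algebras of the Radon (or probability) monad, adapted to the setting of \cite{Furber2015}.

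For the functor $B\colon \ConCom \to \Alg{\radu}$, I would send an object $(V,X)$ to the barycentre map $b_V\colon \rad{V}\to V$, defined by $b_V(\mu)$ being the unique point of $V$ such that $\phi(b_V(\mu)) = \int_V \phi\,\mathrm{d}\mu$ for every continuous affine functional $\phi\colon X\to \mathbb R$. Existence uses that $V$ is compact convex in a locally convex Hausdorff space and a Hahn--Banach/separation argument; uniqueness uses that the continuous affine functionals separate points of $V$. I would then check: $b_V$ is continuous (by definition of the topology on $\rad{V}$, since continuous affine functionals determine the weak topology on the image); the unit law $b_V\circ \eta_V = \mathrm{id}_V$ (barycentre of $\delta_v$ is $v$); and associativity $b_V\circ \rad{b_V}=b_V\circ \mathit{mult}_V$ (both sides agree after pairing with any continuous affine $\phi$, by Fubini for Radon measures). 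On affine continuous maps $f\colon V\to W$, naturality $f\circ b_V = b_W\circ \rad{f}$ is immediate because affine continuous maps preserve barycentres, so $f$ is an algebra homomorphism.

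For the functor $K\colon \Alg{\radu}\to \ConCom$ in the reverse direction, I would equip any algebra $\alpha\colon \rad{A}\to A$ with the convex combination operation $\lambda a + (1-\lambda) b \;:=\; \alpha\bigl(\lambda\delta_a + (1-\lambda)\delta_b\bigr)$, and more generally $\sum \lambda_i a_i := \alpha(\sum\lambda_i\delta_{a_i})$; the algebra axioms yield both the axioms of a convex space (associativity via the multiplication law, projection via the unit law) and continuity of the operation (since $\alpha$ is continuous and finitely supported measures depend continuously on their weights and atoms). To realize $A$ as a compact convex subset of a locally convex Hausdorff topological vector space, I would consider the space $\mathrm{Aff}(A,\mathbb R)$ of continuous affine real-valued functions on $A$ and use the evaluation map $A \hookrightarrow \mathrm{Aff}(A,\mathbb R)^\ast$ into the dual with the weak-$\ast$ topology; this map is continuous, affine, and injective provided $\mathrm{Aff}(A,\mathbb R)$ separates points of $A$. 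Morphisms of algebras are by construction continuous and preserve $\alpha$-combinations of Diracs, hence are continuous affine maps.

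The main obstacle, and the heart of the argument, is showing $B$ and $K$ are mutually inverse; concretely, I need the two facts that $\mathrm{Aff}(A,\mathbb R)$ separates points of any Radon algebra and that for any algebra $\alpha$ one has $\alpha = b_{K(A)}$ on all of $\rad{A}$, not just on finitely supported measures. Separation can be proved by exhibiting enough continuous affine functionals from the algebra structure itself: for any $a\ne a'$ in $A$ one finds a continuous $f\colon A\to \mathbb R$ with $f(a)\ne f(a')$ (since $A$ is compact Hausdorff and thus completely regular), and then affinizes it by $\hat f(x) := \inf\{\int f\,\mathrm{d}\mu : \alpha(\mu)=x\}$ (or by averaging against $\alpha$); a short argument using the algebra laws shows $\hat f$ is continuous, affine, and still separates $a$ from $a'$. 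For the agreement $\alpha=b_{K(A)}$, I would use that $\alpha$ is continuous, that finitely supported Radon measures are weak-$\ast$ dense in $\rad{A}$, and that both $\alpha$ and $b_{K(A)}$ are continuous affine algebra maps agreeing on Diracs, hence on finitely supported measures, hence everywhere. Going around the other way, $K(B(V,X))=V$ with its original convex structure is immediate since barycentres of Dirac combinations recover the original affine combinations, and the embedding into $\mathrm{Aff}(V,\mathbb R)^\ast$ is canonically isomorphic (as a compact convex subset, up to the ambient locally convex envelope, which $\ConCom$ treats up to the relevant equivalence) to the original $(V,X)$. Combined with the naturality already noted, this yields the equivalence $\Alg{\radu}\simeq \ConCom$.
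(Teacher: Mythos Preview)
The paper does not prove this theorem; it is quoted from \cite{Furber2015} and used as input to the later arguments, so there is no proof in the paper to compare against. Your overall strategy---barycentre map in one direction, embedding into $\mathrm{Aff}(A,\mathbb R)^\ast$ in the other, and checking the round-trips via weak-$\ast$ density of finitely supported measures---is the standard \'Swirszcz-style argument and is correct in outline.

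There is, however, a genuine gap precisely where you locate the ``main obstacle''. The affinization $\hat f(x):=\inf\bigl\{\int f\,\mathrm{d}\mu:\alpha(\mu)=x\bigr\}$ is concave, not affine: affinity of $\alpha$ gives only the one-sided inequality $\hat f(\lambda x_1+(1-\lambda)x_2)\le \lambda\hat f(x_1)+(1-\lambda)\hat f(x_2)$. Worse, $\hat f$ can fail to separate. On $A=[0,1]$ with the barycentre algebra, any continuous concave $f$ with $f(0)=f(1)=0$ has $\hat f\equiv 0$, since the fiber over each $x$ contains $(1-x)\delta_0+x\delta_1$; so one can have $f(a)\neq f(a')$ yet $\hat f(a)=\hat f(a')=0$. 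The alternative ``averaging against $\alpha$'' is too vague to repair this. What is actually needed is a proof that every $\radu$-algebra is cancellative (whence the embedding theorem for cancellative convex spaces applies), or equivalently that enough continuous affine functionals on $\rad{A}\subset C(A)^\ast$ are constant on the fibers of $\alpha$ and hence descend to $A$. This is the substantive content of the theorem and does not follow from a short manipulation of the algebra laws; your sketch does not yet supply it.
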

        
        This realises the forgetful functor $U \colon \ConCom \to \CH$ as the right adjoint of the monadic adjunction $\radu' \dashv U$ for $\radu' \colon \CH \to \ConCom$ taking a space $X$ to the convex, compact Hausdorff space $\rad{X}$.
        
        Note then that all the objects of $\Alg{\radu}$ are either spaces of probability measures (the free algebras, equivalently objects of $\kl{\radu}$, Thm.~\ref{thm: Stochastic Gelfand Duality}), or they are (category-theoretical) ``quotients'' of these spaces (the non-free algebras). The transition from classical to quantum probability has as a crucial part the fact that some convex mixtures of outcomes to an experiment are equivalent (for example, different decompositions of mixed states into pure states).

    \subsection{Tensor products and complete positivity}
    \label{sec:tensors}
    Recall that the tensor product of Hilbert spaces, $\HH_1 \otimes_{H} \HH_2$, is the completion of the algebraic tensor product $\HH_1 \otimes \HH_2$ under the inner product norm. 
            
        
        \begin{defn}[Spatial Tensor Product of $\Cs$-algebras] \label{defn: Spatial Tensor Product}
            Let $\CStA_1$ and $\CStA_2$ be $\Cs$-algebras, with representations $\pi_i \colon \CStA_i \to \BB(\HH_i)$ (Thm.~\ref{thm:gelfandniamark}).
            Then the map $\pi_1\otimes \pi_2 \colon \CStA_1 \otimes \CStA_2 \to \BB(\HH_1\otimes_{H} \HH_2)$ given by $(\pi_1\otimes \pi_2) (A_1 \otimes A_2) = \pi_1(A_1)\otimes\pi_2(A_2)$ is a faithful representation of the vector space $\CStA_1 \otimes \CStA_2$ and thus we can use it to give a norm to that space, $\|a\|_{\ast} = \| (\pi_1\otimes\pi_2) (a)\|$, which is independent of the choice of representations ($\pi_i$). \textit{The spatial tensor product} $\CStA_1 \spotimes \CStA_2$ is the completion with respect to this norm. 
        \end{defn}
        
        There are other tensor products definable on $\Cs$-algebras, though this is the smallest. 
        %
        %
        %
        If a $\Cs$-algebra $\CStA$ is finite dimensional or commutative, then all possible $\Cs$-norms on $\CStA\otimes\CStB$ are equivalent.
        
        Not all positive maps are physical, and now that we have defined a tensor product on $\Cs$-algebras, we are able to define those that are.
        
        \begin{defn}[Completely Positive]\label{defn: Completely Positive}
            A linear map between $\Cs$-algebras $\phi \colon \CStA \to \CStB$ is called \emph{completely positive} if, for all $n \in \mathbb N$, the map
            $
            \phi \otimes 1_n \colon \CStA \otimes \BB(\CC^n) \to \CStB \otimes \BB(\CC^n)
            $
            is positive.
        \end{defn}
        
        All positive maps to or from commutative spaces are completely positive.
        
        \begin{exmp}Let $\HH,\KK$ be finite dimensional Hilbert spaces. Then a completely positive and unital map $\BB(\HH)\to \BB(\KK)$ induces a function
        $S(\BB(\KK))\to S(\BB(\HH))$ between the corresponding spaces of density matrices (Ex.~\ref{ex:density}); these are exactly the quantum channels (e.g.~\cite{NielsenChuang2000}). 
        \end{exmp}
        
        As we noticed a few times before, much of the seemingly infinite behaviour in different formulations of the De Finetti theorem is, in fact, the result of behaviour which happens on all possible finite truncations of a process, with requirements of consistency between them. 
        
        Given a $\Cs$-algebra $\CStA$, we can define  \raisebox{0pt}[0pt][0pt]{$\Xon{n} :=\underbrace{\CStA \spotimes \dots \spotimes \CStA}_{n \text{ times}}$}.
        For $n \leq m$, there is an isometric $^\ast$-homomorphism embedding of $\Cs$-algebras
        \begin{align*}
            \iota_{nm} \colon \Xon{n} &\to \Xon{m} \qquad
            \textstyle \iota_{nm}\Big(\bigotimes_{i=1}^n A_i\Big) \ := \  \bigotimes_{i=1}^n A_i \otimes \bigotimes_{i=n+1}^m 1_\CStA.
        \end{align*}
        
        \begin{defn}[Infinite Spatial Tensor Product]\label{defn: Infinite Spatial Tensor Product}
            The \emph{(countably) infinite spatial tensor product of $\CStA$} is defined as the colimit as Banach spaces (equivalently, in $\CstPU$) of the $\omega$-shaped diagram, for $\omega = \begin{tikzcd}[column sep =tiny] 0 \arrow[r] & 1 \ar[r] & 2 \ar[r] & 3 \ar[r]& \cdots \end{tikzcd}$, which has the objects $\Xon{n}$ for all $n \in \mathbb N$ and the morphisms $\iota_{nm}$ for all $n \leq m$. We will denote it by $\Xoinf$.
            
            This comes equipped with embeddings, again isometric $^\ast$-homomorphisms,
            $
            \psi_n \colon \Xon{n} \to \Xoinf
            $
            which we may intuitively imagine as taking $\bigotimes_{i=1}^n A_i \mapsto \bigotimes_{i=1}^n A_i \otimes \bigotimes_{i=n+1}^\infty 1_\CStA$ since for all $n \leq m$, $\psi_n = \psi_m \circ \iota_{nm}$.
            
            \begin{equation}\label{dgm:inftensor}
            \begin{tikzcd}[row sep=small]
                {\Xon{1}} & {\Xon{2}} & \dots & {\Xon{n}} & \dots & {\Xon{m}} & \dots \\
            	\\
            	&&& \Xoinf
            	\arrow["{\iota_{12}}", from=1-1, to=1-2]
            	\arrow["{\iota_{2n}}", from=1-2, to=1-4]
            	\arrow["{\iota_{nm}}", from=1-4, to=1-6]
            	\arrow["{\psi_1}"', dashed, to=3-4, from=1-1]
            	\arrow["{\psi_2}", dashed, to=3-4, from=1-2]
            	\arrow["{\psi_n}", dashed, to=3-4, from=1-4]
            	\arrow["{\psi_m}", dashed, to=3-4, from=1-6]
            \end{tikzcd}
            \end{equation}
        \end{defn}
        \subsection{The Quantum Kolmogorov Extension and de Finetti Theorems}\label{sec:prelim-qdf}
    
        \begin{thm}[Quantum Kolmogorov Extension Theorem \cite{Guichardet1969a}]\label{thm: Quantum Kolmogorov Extension Theorem}
            Let $\{ \rho_n\}_{n \in \mathbb N}$ be a sequence with $\rho_n \in \stat{\Xon{n}}$ such that, for all $n \leq m$, $\rho_n = \rho_m \circ \iota_{nm}$. There is a unique state $\rho \in S(\Xoinf)$ such that $\rho_n = \rho \circ \psi_n$ for all $n \in \mathbb N$ (with $\psi_n$ as in~\eqref{dgm:inftensor}).
        \end{thm}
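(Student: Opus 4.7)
The plan is to deduce this extension theorem directly from the universal property of $\Xoinf$ as a colimit in $\CstPU$, recorded in Definition~\ref{defn: Infinite Spatial Tensor Product}, together with the observation that states on a $\Cs$-algebra are precisely morphisms to $\CC$ in $\CstPU$ (Def.~\ref{defn: State on a Cstar Algebra}).

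First I would observe that the compatibility condition $\rho_n = \rho_m \circ \iota_{nm}$ for all $n \leq m$ says exactly that the family $\{\rho_n : \Xon{n} \to \CC\}_{n \in \mathbb{N}}$ forms a cocone, with apex $\CC$, from the $\omega$-shaped diagram $(\Xon{n}, \iota_{nm})$ in $\CstPU$. Since Definition~\ref{defn: Infinite Spatial Tensor Product} presents $\Xoinf$ together with $\{\psi_n\}$ as the colimit of this diagram in $\CstPU$, the universal property yields a unique morphism $\rho : \Xoinf \to \CC$ in $\CstPU$ --- equivalently, a state on $\Xoinf$ --- satisfying $\rho \circ \psi_n = \rho_n$ for every $n$. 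Uniqueness of $\rho$ is then automatic from the uniqueness half of the universal property.

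The main work hidden behind this categorical one-liner is verifying that the colimit-in-Banach-spaces construction used in Definition~\ref{defn: Infinite Spatial Tensor Product} really does give a colimit in $\CstPU$, and this is where I would expect the only real obstacle to lie. Concretely, given the compatible family $\{\rho_n\}$, one checks that the linear functional defined on the dense subspace $\bigcup_n \psi_n(\Xon{n}) \subset \Xoinf$ by $\psi_n(a) \mapsto \rho_n(a)$ is (i) well-defined, because the $\iota_{nm}$ are isometric $^\ast$-homomorphisms and the cocone condition makes the $\rho_n$ glue coherently; (ii) bounded of norm one, since each $\rho_n$ is a state of norm one and the $\psi_n$ are isometries, so the functional extends continuously to all of $\Xoinf$; (iii) unital, because $\rho(1) = \rho(\psi_1(1)) = \rho_1(1) = 1$; and (iv) positive, because any positive element of $\Xoinf$ has the form $y^\ast y$ with $y$ a norm limit of elements $y_k \in \psi_{n_k}(\Xon{n_k})$, so by continuity $\rho(y^\ast y) = \lim_k \rho_{n_k}(y_k^\ast y_k) \geq 0$, using that each $\rho_{n_k}$ is positive. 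None of these verifications is deep, but together they are the analytic content of the theorem; the categorical framing makes the existence-and-uniqueness conclusion essentially formal once they are in place.
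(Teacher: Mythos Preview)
The paper does not give its own proof of this statement; it is quoted as a background result from Guichardet in the preliminaries section, so there is no in-paper argument to compare against.

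That said, your approach is correct and is precisely the one implicit in the paper's setup. Definition~\ref{defn: Infinite Spatial Tensor Product} already records, in its parenthetical, that $\Xoinf$ is the colimit of the $\omega$-diagram in $\CstPU$; once this is granted, the theorem is literally the universal property evaluated at the apex $\CC$, since states are exactly the $\CstPU$-morphisms to $\CC$. You correctly isolate the only non-formal content --- that the Banach-space colimit really is a colimit in $\CstPU$ --- and the four checks you outline (well-definedness on the dense union via the cocone condition and isometry of the $\psi_n$, norm-one bound, unitality, positivity via approximation and continuity) are exactly what is required for the apex-$\CC$ case and are all sound. The paper later repackages this same observation in categorical dress as Theorem~\ref{thm:catqkolmogorov}, whose one-line proof (apply $\statu$ to the colimit diagram and use Lemma~\ref{thm: State Space Functor (Co)limit Reflection and Preservation}) is the state-space-functor formulation of what you have written out by hand.
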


        \begin{defn}[Exchangeable State]
            Given a permutation $\sigma \in \sym{n}$, we may define an associated map permuting the spaces of the n-fold tensor product $\CStA^{\otimes n}$
            \begin{align*}
                \eta_\sigma \colon \CStA^{\otimes n} &\to \CStA^{\otimes n} 
                \qquad \eta_\sigma(\CStElA_1 \otimes \dots \otimes \CStElA_n) \ := \ \CStElA_{\sigma^{-1}(1)} \otimes \dots \otimes \CStElA_{\sigma^{-1}(n)}
            \end{align*}
            
            A state $\rho_n \in \stat{\Xon{n}}$ is said to be \emph{symmetric} if, for all $\sigma \in \sym{n}$, $\rho_n = \rho_n\circ \eta_\sigma$. A state $\rho \colon \Xoinf \to \CC$ is said to be \emph{exchangeable} if, for all $n \in \mathbb N$, $\rho_n = \rho \circ \psi_n$ is symmetric. Note again that we are only concerned with permutations on a finite number of factors.
            
            \emph{The space of exchangeable states of $\Xoinf$}, denoted by $\sstat{\CStA}:= \{\rho \in S(\Xoinf) \, \vert \, \rho \text{ is exchangeable} \}$ is convex, compact and Hausdorff, so in $\ConCom$.
            
        \end{defn}

        \begin{thm}[Størmer's Quantum de Finetti Theorem]\label{thm: Quantum De Finetti Theorem}
            Let $\CStA$ be a $\Cs$-algebra. Then there is a bi-continuous, affine bijection $\sstat{\CStA} \cong \rad{\stat{\CStA}}$. In other words, the exchangeable state space of $\Xoinf$ is isomorphic to $\stat{\cont{\stat{\CStA}}}\cong\rad{\stat{\CStA}}$ in $\ConCom$.
        \end{thm}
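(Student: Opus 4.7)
The plan is to construct an explicit affine continuous map $\Phi\colon \rad{\stat{\CStA}} \to \sstat{\CStA}$, establish its bijectivity via Størmer's original 1969 result, and deduce bi-continuity for free from compactness. The map $\Phi$ encodes the idea that any distribution $\nu$ over states assembles to the exchangeable state obtained by averaging infinite product states $p^\infty$ weighted by $\nu$.

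To build $p^\infty \in \sstat{\CStA}$ from a single state $p \in \stat{\CStA}$, I first form, at each finite stage, the product state $p^{\otimes n} \in \stat{\Xon{n}}$ given by $p^{\otimes n}(a_1 \otimes \cdots \otimes a_n) = p(a_1)\cdots p(a_n)$; this is well-defined on the spatial tensor product. The family $\{p^{\otimes n}\}_n$ is Kolmogorov-compatible, i.e.\ $p^{\otimes n} = p^{\otimes m} \circ \iota_{nm}$ for $n \le m$, because $p$ is unital so inserting factors of $1_\CStA$ contributes nothing. Theorem~\ref{thm: Quantum Kolmogorov Extension Theorem} then produces a unique $p^\infty \in \stat{\Xoinf}$ restricting to each $p^{\otimes n}$; each $p^{\otimes n}$ is clearly $\sym{n}$-invariant, so $p^\infty$ is exchangeable. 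I then define
\begin{equation*}
\Phi(\nu)(a) \ := \ \int_{p \in \stat{\CStA}} p^\infty(a)\, \mathrm{d}\nu(p) \qquad \text{for } \nu \in \rad{\stat{\CStA}},\ a \in \Xoinf.
\end{equation*}
For well-definedness I need $p \mapsto p^\infty(a)$ to be continuous for every $a \in \Xoinf$; it suffices to check this on the dense subalgebra $\bigcup_n \psi_n(\Xon{n})$, where on an elementary tensor it reduces to $p \mapsto p(a_1)\cdots p(a_n)$, a product of continuous functions by Definition~\ref{defn: State Space}, and a uniform-continuity argument transports this across the completion.

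Affinity of $\Phi$ is linearity of the integral in $\nu$, and continuity follows because the subspace topology on $\sstat{\CStA} \subseteq \stat{\Xoinf}$ is generated by evaluations at $a \in \Xoinf$, each of which becomes, under $\Phi$, integration of a continuous function on $\stat{\CStA}$ against $\nu$, which is continuous in $\nu$ by definition of the topology on $\rad{\stat{\CStA}}$.

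The main obstacle is bijectivity. Injectivity I would handle directly: the functions $p \mapsto p(a_1)\cdots p(a_n)$ (for $n \in \mathbb{N}$, $a_i \in \CStA$) separate points of the compact Hausdorff space $\stat{\CStA}$ and their linear span is a subalgebra containing constants, so by Stone--Weierstrass it is dense in $C(\stat{\CStA})$; hence $\Phi(\nu)$ determines $\nu$. Surjectivity is the genuinely hard direction and is precisely Størmer's theorem proper: given an exchangeable $\rho \in \sstat{\CStA}$, one must produce a $\nu$ with $\Phi(\nu) = \rho$. Størmer's argument runs through a mean-ergodic analysis of symmetric averages $\tfrac{1}{n!}\sum_{\sigma \in \sym{n}} \rho \circ \eta_\sigma$ to carve out a commutative $\Cs$-subalgebra of $\Xoinf^{\ast\ast}$ identified with $C(\stat{\CStA})$, from which $\nu$ is extracted via ordinary Riesz representation. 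Once bijectivity is in hand, bi-continuity comes for free, since a continuous bijection between compact Hausdorff spaces is automatically a homeomorphism.
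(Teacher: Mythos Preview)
The paper does not prove this theorem: it is stated as a known result attributed to St\o rmer~\cite{Stormer1969}, with only a Remark afterwards making the isomorphism explicit. Your construction of the map $\Phi$ agrees exactly with that Remark (the paper writes it dually as precomposition with $\Phi\colon \Xoinf \to C(\stat{\CStA})$, $\Phi(a)(\rho)=\rho^{\otimes\infty}(a)$, which is your integral formula unpacked via Riesz), so there is nothing to compare on that front.

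Your argument for continuity, affinity, injectivity (via Stone--Weierstrass), and automatic bi-continuity from compactness is correct. The one point worth flagging is your sketch of St\o rmer's surjectivity argument: St\o rmer's actual proof does not proceed via ``mean-ergodic analysis of symmetric averages'' carving out a commutative subalgebra of the bidual. Rather, it goes through Choquet theory: one shows that $\sstat{\CStA}$ is a Choquet simplex whose extreme points are precisely the product states $p^{\otimes\infty}$, and then the Choquet representation theorem furnishes the measure $\nu$ supported on those extreme points. Since you correctly identify surjectivity as the hard direction and defer to St\o rmer for it, this is not a gap in your logic, only an inaccuracy in your description of what his paper contains.
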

        
        \begin{rmrk*}
            To be explicit about this isomorphism we note that given states $\rho_1 \in \stat{\CStA_1}, \rho_2 \in \stat{\CStA_2}$, we can form a unique state $\rho_1 \otimes \rho_2 \in \stat{\CStA_1 \spotimes \CStA_2}$ with the property that $\rho_1 \otimes \rho_2 (\CStElA_1 \otimes \CStElA_2) = \rho_1(\CStElA_1)\rho_2(\CStElA_2)$. Then, for $\rho \in \stat{\CStA}$, we define
            $$
            \rho^{\otimes n} := \underbrace{\rho \otimes \dots \otimes \rho}_{n \text{ times}}\in \stat{\Xon{n}}.
            $$
            There is a state on $\Xoinf$, $\rho^{\otimes \infty}\in\stat{\Xoinf}$,  via theorem \ref{thm: Quantum Kolmogorov Extension Theorem} from the sequence $\left\{ \rho^{\otimes n} \right\}_{n\in\mathbb N}$.
            
            This isomorphism then is given by $-\circ \Phi \colon  \stat{\cont{\stat{\CStA}}} \to \sstat{\CStA}$ where $\Phi \colon \Xoinf \to C(\stat{\CStA})$ is defined as $\Phi(\CStElA)(\rho) = \rho^{\otimes \infty} (\CStElA)$.
        \end{rmrk*}
\section{(Co)limits and the State Space Functor}
\label{sec:lemmas}
We now build on the prior work in Section~\ref{sec:prelim} to characterize the categorical limits in the categories of positive unital maps between $\Cs$-algebras ($\op{\CstPU}$) and compact convex spaces ($\ConCom$).

    \begin{lma} \label{thm: State Space Functor (Co)limit Reflection and Preservation}
        The state space functor $\statu \colon \op{\CstPU} \to \ConCom$ (Thm.~\ref{thm: State Space Functor is Full and Faithful}) preserves and reflects limits.
    \end{lma}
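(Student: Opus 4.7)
The plan is to handle reflection via the full and faithfulness of $\statu$ (Thm.~\ref{thm: State Space Functor is Full and Faithful}), and to handle preservation by exhibiting $U\circ\statu : \op{\CstPU} \to \CH$ as a right adjoint and then lifting along the monadicity of $U : \ConCom \to \CH$ (Thm.~\ref{thm:concom-monadic}).

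\emph{Reflection} will follow from the standard fact that every full and faithful functor reflects limits. Given a cone $\{L \to A_i\}$ in $\op{\CstPU}$ whose $\statu$-image is a limit in $\ConCom$, any competing cone $\{M \to A_i\}$ admits a unique mediating morphism $\statu(M) \to \statu(L)$ in $\ConCom$ by universality; fullness lifts this morphism to $\op{\CstPU}$, faithfulness ensures uniqueness there, and faithfulness also reflects the cone equations from $\ConCom$.

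\emph{Preservation.} I will verify that $C(-) : \CH \to \op{\CstPU}$ is left adjoint to $U \circ \statu$ via the natural bijection
\[
    \CstPU(A, C(X)) \;\cong\; \CH(X, \statu(A))
\]
sending a positive unital $\phi : A \to C(X)$ to $x \mapsto \phi(-)(x)$. Positivity and unitality of $\phi$ say precisely that each function $\phi(-)(x) : A \to \CC$ is a state, and the requirement that $\phi(a) \in C(X)$ for every $a$ is exactly continuity of $x \mapsto \phi(-)(x)$ with respect to the weak* topology on $\statu(A)$. Naturality in $X$ and in $A$ is a routine verification. Hence $U\circ\statu$ preserves limits. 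Since $U$ is monadic, it creates limits and is in particular conservative; so for any limit $L$ of a diagram $D$ in $\op{\CstPU}$, the canonical comparison $\statu(L) \to \lim_{\ConCom}\statu(D)$ (which exists because $\ConCom$ has all limits, being monadic over $\CH$) projects to an isomorphism in $\CH$ and is therefore already an isomorphism. This shows $\statu$ preserves the limit.

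The only technical content is the adjunction $C(-) \dashv U\circ\statu$, and I do not expect this to present a genuine obstacle: it is a direct unfolding of the definitions of the weak* topology on state spaces and the sup-norm $\Cs$-structure on $C(X)$, though care is needed to verify naturality in both variables.
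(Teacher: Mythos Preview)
Your argument is correct and shares the paper's overall architecture: reflection via full faithfulness, and preservation by showing that $\statu$ composed with forgetful functors preserves limits and then lifting back up via monadicity. The one genuine difference is how far down you descend. The paper composes $\statu$ with \emph{both} monadic forgetful functors $U\colon \ConCom\to\CH$ and $U'\colon\CH\to\Set$, observes that the resulting composite is the representable functor $\CstPU(-,\CC)=\op{\CstPU}(\CC,-)$ (which preserves limits for free), and then creates the limit back through the two monadic layers. You instead stop at $\CH$ and establish the adjunction $\contu\dashv U\circ\statu$ by an explicit currying argument, invoking only the single monadicity of $U$ at the end. Your route trades the second monadicity for a hand-verified adjunction; the paper's route avoids that verification by going one level further to $\Set$ where representability is automatic. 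Neither is clearly preferable, and both are short.
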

    
    That is to say that, given a diagram of $\Cs$-algebras $\CStA_{-} \colon \mathcal J \to \CstPU$, a cocone $\{ \CStA_j \to \CStB \}_{j \in \mathcal J}$ is a colimit if and only if the corresponding cone $\{ \stat{\CStB} \to \stat{\CStA_j} \}_{j \in \mathcal J}$ of the diagram $\op{\mathcal J} \to \op{\CstPU} \overset{\statu}{\to} \Alg{\mathcal R}$ is a limit.
    
    \begin{proof}
        Since $\statu$ is full and faithful (Thm.~\ref{thm: State Space Functor is Full and Faithful}), it must reflect limits and colimits. 
        
        There are monadic forgetful functors from $U \colon \Alg{\radu} \to \CH$ and from $ U' \colon \CH \to \Set$ (Thm.~\ref{thm:concom-monadic}, \cite[VI.9]{MacLane}). They create limits. The composition
        $$\begin{tikzcd}
            \op{\CstPU} \ar[r, "\statu"] &\Alg{\radu} \ar[r, "U"] & \CH \ar[r, "U'"] & \Set
        \end{tikzcd}$$
        is just the hom-functor $\CstPU(-,\mathbb C) \colon \op{\CstPU} \to \Set$. Thus, supposing a colimit exists in $\CstPU$, it is preserved via $\CstPU(-,\mathbb C)$ into a limit in $\Set$, which then creates a limit in $\CH$ and then creates another in $\Alg{\radu} \cong \ConCom$. Thus $\statu$ preserves the original colimit(/limit in $\op{\CstPU}$).
    \end{proof}
    
        
    
    The de Finetti limits we construct will be a simple result of the way that limits in $\ConCom$ are constructed \textit{pointwise} in the following way:
    
    \begin{lma}[Pointwise limits are limits in $\ConCom$]\label{thm: CCLcvx Pointwise Limit}
        Consider a diagram $\mathbf W_{-} \colon \mathcal J \to \ConCom$. That is, consider a collection $\{ \mathbf W_i\}_{i \in J}$ of compact convex spaces with affine continuous maps ${\{\xi_k \colon  \mathbf W_i \to \mathbf W_j \}_{k \in \mathcal J(i,j)}}$. 
        Let $\mathbf W$ be a convex, compact Hausdorff space with morphisms $\omega_i \colon \mathbf W \to \mathbf W_i$ satisfying the following properties:
        
        \begin{itemize}
            \item $\{\omega_i \colon \mathbf W \to \mathbf W_i \}_{i \in J}$ is a cone over the diagram: for all $\xi_k \colon \mathbf W_i \to \mathbf W_j$ in the diagram, $\omega_i = \omega_j \circ \xi_k$.
            \item For any collection of elements $( w_i )_{i \in J}\in \prod_{i\in J} W_i$ which are compatible, in the sense that for all $\xi_k \colon \mathbf W_j \to \mathbf W_i$ in the diagram, $w_i = \xi_k(w_k)$, there is a unique element $w \in \mathbf W$ with
        $
        w_i = \omega_i(w)$.
        \end{itemize}

        Then $\mathbf W$ is the limit of the diagram in $\CH$ and $\ConCom$.
    \end{lma}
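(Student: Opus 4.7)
The plan is to exploit monadicity: the forgetful functors $U \colon \ConCom \to \CH$ (Thm.~\ref{thm:concom-monadic}) and $U' \colon \CH \to \Set$ are both monadic, so each creates all small limits, and hence so does the composite $U' \circ U$. The lemma will follow by comparing the given cone to the categorical limit on the set level and using compactness to upgrade a bijection to an isomorphism.

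First, I would note that because $\ConCom$ has all small limits (being monadic over $\Set$), we may pick an actual limit $(L, \{\ell_i \colon L \to \mathbf{W}_i\}_{i \in \mathcal{J}})$ of the diagram $\mathbf{W}_{-}$ in $\ConCom$. Since $U' \circ U$ creates limits, the underlying set of $L$ is the familiar set of compatible families
\[
\Bigl\{(w_i)_{i \in \mathcal J} \in \prod_{i \in \mathcal J} \mathbf{W}_i \;\Big|\; \xi_k(w_i) = w_j \text{ whenever } \xi_k \colon \mathbf{W}_i \to \mathbf{W}_j\Bigr\},
\]
with each $\ell_i$ given by the $i$-th projection.

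Next, the first bullet hypothesis says that $(\mathbf{W}, \{\omega_i\})$ is a cone over the diagram, so the universal property of $L$ yields a unique morphism $f \colon \mathbf{W} \to L$ in $\ConCom$ with $\ell_i \circ f = \omega_i$ for every $i$. At the level of underlying sets, this $f$ sends $w$ to $(\omega_i(w))_{i \in \mathcal J}$. The second bullet hypothesis says precisely that this set-level function is a bijection: every compatible family has a unique preimage in $\mathbf{W}$.

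Finally, I would upgrade this bijection to an isomorphism in $\ConCom$. Since $\mathbf{W}$ is compact and $L$ is Hausdorff, the continuous bijection $f$ is automatically a homeomorphism; its inverse is affine because $f$ is affine and bijective. Hence $f$ is an isomorphism in $\ConCom$, and transporting the limit structure through $f$ shows that $(\mathbf{W}, \{\omega_i\})$ is itself a limit in $\ConCom$. Applying the same argument one level down, using that $U' \colon \CH \to \Set$ is monadic and so also creates limits, we conclude that $(\mathbf{W}, \{\omega_i\})$ is a limit in $\CH$ as well.

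The main obstacle, such as it is, is the (standard) verification that the comparison map $f$ is not merely a bijection but a genuine $\ConCom$-isomorphism; this is exactly where compactness of $\mathbf{W}$ and Hausdorffness of $L$ are used. Everything else is a bookkeeping consequence of monadicity and the universal property.
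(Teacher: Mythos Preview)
Your argument is correct and rests on the same underlying fact as the paper's---that the forgetful functors $U\colon\ConCom\to\CH$ and $U'\colon\CH\to\Set$ are monadic and hence create limits---but you deploy it differently. The paper verifies directly that $(\mathbf W,\{\omega_i\})$ satisfies the universal property of the limit in $\Set$ (given any cone $\{f_i\colon A\to W_i\}$ from a set $A$, the pointwise hypothesis supplies the unique mediator $a\mapsto w(a)$), and then invokes \emph{creation} of limits to conclude that the same cone, with its given topology and convex structure, is the limit in $\CH$ and in $\ConCom$; the uniqueness clause in ``creates limits'' is what guarantees the pre-existing structure on $\mathbf W$ agrees with the created one. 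You instead build the abstract limit $L$ first, obtain the comparison map $f\colon\mathbf W\to L$, and use the compact--Hausdorff trick to promote the set-level bijection to a $\ConCom$-isomorphism. Both routes are sound; the paper's is a touch more economical because it avoids constructing $L$ and the explicit compactness argument, while yours makes the passage from bijection to isomorphism completely transparent rather than hiding it inside the creation-of-limits machinery.
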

    
    $$
    \begin{tikzcd}[row sep=tiny]
                            &           & \mathbf{W}_1  \\
        \{ \ast \}          & \mathbf{W} &    \\
                            &           & \mathbf{W}_2
        \ar["w_1", from=2-1, to=1-3, bend left] \ar["\omega_1", from=2-2, to=1-3]
        \ar["\exists ! w", from=2-1, to=2-2, dashed]
        \ar["w_2"', from=2-1, to=3-3, bend right] \ar["\xi", from=1-3, to=3-3] \ar["\omega_2"', from=2-2, to=3-3]
    \end{tikzcd} 
    $$

    If the cone comprises state spaces $\mathbf W_j = \stat{\CStA_j}$ and $\mathbf W = \stat{\CStB}$, rather than general objects of $\ConCom$, then the second condition can alternatively be visualised on the level of positive unital maps:
    $$
    \begin{tikzcd}[row sep=tiny]
        \CStA_1    &           &               \\
                        & \CStB & \mathbb C    \\
        \CStA_2    &           &
        \ar["\rho_1", from=1-1, to=2-3, bend left] \ar[from=1-1, to=2-2, "\phi_1"]
        \ar["\exists ! \rho", from=2-2, to=2-3, dashed]
        \ar["\rho_2"', from=3-1, to=2-3, bend right] \ar["f", from=3-1, to=1-1] \ar["\phi_2"', from=3-1, to=2-2]
    \end{tikzcd} 
    $$
    
    In this way we may talk about limits of state spaces being built \textit{pointwise}. If every set of compatible states corresponds to a limiting state on some other space, then this is naturally a limit, without having to concern outselves with continuity or affineness of the limiting maps.
    
    \begin{proof}
        The forgetful functors $U\colon \ConCom \to \CH$ and $U' \colon \CH \to \Set$ are both monadic (Thm.~\ref{thm:concom-monadic}, \cite[VI.9]{MacLane}) and thus create limits. 
        
        Suppose we have a diagram $\mathbf W_{-} \colon \mathcal J \to \ConCom$ as above, and a pointwise limit $\mathbf W$ with maps $\omega_i \colon \mathbf W \to \mathbf W_i$, where $W_i$ is the underlying set for each $\mathbf W_i$. Then for any set $A$, a cone of the diagram of $W_i$s, that is a collection of maps $\left\{ f_i \colon A \to W_i \right\}_{i \in I}$ gives a unique map $f \colon A \to W$ by, for each $a \in A$, letting $f(a)$ be the element of $W$ corresponding to the collection $\{ f_i(a) \in \mathbf W_i \}$ under the pointwise limit property. Thus, $W$ is the limit in $\Set$ of the diagram
        $$
        \begin{tikzcd}
            \mathcal J & \ConCom & \CH & \Set
            \ar["\mathbf W_{-}", from=1-1, to=1-2] \ar["U", from=1-2, to=1-3] \ar["U'", from=1-3, to=1-4]
        \end{tikzcd}
        $$

        Since $\CH$ is monadic over $\Set$, then $\mathbf W$, regarded as a compact Hausdorff space, is created as the limit in $\CH$ and similarly by monadicity of $\ConCom$ over $\CH$, it is created as the limit in $\ConCom$.
    \end{proof}
    
    Informally, what this lemma tells us is that non-categorical, state-based treatments of limiting behaviours have been categorical all along. Looking at these things pointwise is fine because the structural properties take care of themselves.
\section{A Quantum de Finetti Theorem as a Categorical Limit}\label{sec:qdf}
We now use the lemmas of Section~\ref{sec:lemmas} to recast the quantum Kolmogorov extension theorem~(Thm.~\ref{thm: Quantum Kolmogorov Extension Theorem}) and quantum de Finetti theorem (Thm.~\ref{thm: Quantum De Finetti Theorem}) in a categorical light (Thms.~\ref{thm:catqkolmogorov}, \ref{thm:qdf}). 
\begin{thm}[Quantum Kolmogorov Extension Theorem as a Categorical Limit]\label{thm:catqkolmogorov}
    Let $\CStA$ be a $\Cs$-algebra. The limit of the diagram of state spaces (as compact convex spaces in  $\ConCom$)
    \begin{equation}\label{eqn: Kolmogorov Quantum Diagram}
    \begin{tikzcd}
        \stat{\CStA}    & \stat{\Xon{2}} \ar[l, "- \circ \iota_{12}"'] & \stat{\Xon{3}} \ar[l, "- \circ \iota_{23}"'] & \cdots \ar[l]
    \end{tikzcd}
    \end{equation}
    is $\stat{\Xoinf}$ with the inclusions $- \circ \psi_n \colon \stat{\Xoinf} \to \stat{\Xon{n}}$.
\end{thm}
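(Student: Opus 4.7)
The plan is to apply the pointwise limit criterion of Lemma~\ref{thm: CCLcvx Pointwise Limit} directly, using the (non-categorical) Quantum Kolmogorov Extension Theorem (Thm.~\ref{thm: Quantum Kolmogorov Extension Theorem}) to supply the universal property at the level of underlying sets. The diagram \eqref{eqn: Kolmogorov Quantum Diagram} already lives in $\ConCom$, and $\stat{\Xoinf}$ is a compact convex space, so the only thing to check is that the cone of maps $-\circ \psi_n \colon \stat{\Xoinf} \to \stat{\Xon{n}}$ is a pointwise limiting cone in the sense of Lemma~\ref{thm: CCLcvx Pointwise Limit}.

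First I would verify the cone condition: for $n \leq m$ we have $\psi_n = \psi_m \circ \iota_{nm}$ from the colimit property defining $\Xoinf$ (Def.~\ref{defn: Infinite Spatial Tensor Product}), so precomposition with these morphisms gives $(- \circ \psi_n) = (- \circ \iota_{nm}) \circ (- \circ \psi_m)$, exhibiting $\{-\circ \psi_n\}$ as a cone over \eqref{eqn: Kolmogorov Quantum Diagram}. Next I would check the pointwise universal property: suppose $(\rho_n)_{n\in\mathbb N}$ is a family with $\rho_n \in \stat{\Xon{n}}$ compatible in the sense that $\rho_n = \rho_m \circ \iota_{nm}$ for all $n \leq m$. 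Thm.~\ref{thm: Quantum Kolmogorov Extension Theorem} provides a unique state $\rho \in \stat{\Xoinf}$ such that $\rho_n = \rho \circ \psi_n$ for all $n$, which is exactly the second bullet in Lemma~\ref{thm: CCLcvx Pointwise Limit}.

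Applying Lemma~\ref{thm: CCLcvx Pointwise Limit}, $\stat{\Xoinf}$ with the cone $\{-\circ \psi_n\}$ is the limit of \eqref{eqn: Kolmogorov Quantum Diagram} in $\ConCom$ (and in $\CH$). There is essentially no obstacle here: the entire content of the theorem is just a repackaging of the Kolmogorov extension result through the monadicity of $\ConCom$ over $\CH$ over $\Set$, which is precisely what Lemma~\ref{thm: CCLcvx Pointwise Limit} was designed to exploit. If anything, the only subtlety worth remarking on is that the compatibility condition in Thm.~\ref{thm: Quantum Kolmogorov Extension Theorem} matches the compatibility condition in Lemma~\ref{thm: CCLcvx Pointwise Limit} verbatim once one identifies the morphisms of the diagram as precomposition with the $\iota_{nm}$.
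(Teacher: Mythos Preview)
Your proof is correct, but it takes a different route from the paper's. The paper observes that diagram~\eqref{eqn: Kolmogorov Quantum Diagram} is exactly the image under the state space functor $\statu$ of the $\omega$-diagram whose colimit in $\CstPU$ is $\Xoinf$ (Def.~\ref{defn: Infinite Spatial Tensor Product}), and then invokes Lemma~\ref{thm: State Space Functor (Co)limit Reflection and Preservation} to conclude that $\statu$ carries this colimit to a limit in $\ConCom$. So the paper never appeals to Thm.~\ref{thm: Quantum Kolmogorov Extension Theorem} directly; instead it uses the stronger categorical fact that $\Xoinf$ is already a colimit in $\CstPU$, together with representability of $\statu$ and monadicity. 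Your argument, by contrast, bypasses Lemma~\ref{thm: State Space Functor (Co)limit Reflection and Preservation} entirely and feeds the pointwise content of Thm.~\ref{thm: Quantum Kolmogorov Extension Theorem} straight into Lemma~\ref{thm: CCLcvx Pointwise Limit}. This is arguably more self-contained, since it only needs the extension property for cocones into $\CC$ rather than the full $\CstPU$-colimit property asserted in Def.~\ref{defn: Infinite Spatial Tensor Product}; on the other hand, the paper's route is shorter and makes clearer that the categorical Kolmogorov theorem is really just preservation of an existing (co)limit by a hom-functor.
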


\begin{proof}
    Since this diagram is exactly the image under $\statu$ of that which has $\Xoinf$ as its colimit (Def.~\ref{defn: Infinite Spatial Tensor Product}), and by Lma.~\ref{thm: State Space Functor (Co)limit Reflection and Preservation}, $\statu$ preserves colimits.
\end{proof}

\begin{lma}[Quantum de Finetti Theorem as a Categorical Limit For Positive Maps]\label{lemma:qdf}
    Let $\CStA$ be a $\Cs$-algebra. Let $\Iinj$ be the category of the finite sets $\{ 1 ,\dots, n\}$ for $n \in \mathbb N$ and injections between them. We define an $\Iinj$-indexed diagram in $\CstPU$ by taking $\{ 1 ,\dots, n\}$ to $\Xon{n}$ and, for $n \leq m$ and an injection $\tau \colon \{1,\dots,n\} \hookrightarrow \{ 1, \dots, m \}$, we define
    \begin{equation}\label{eqn:QdF-diagram}
        \eta_\tau \colon \CStA^{\otimes n} \to \CStA^{\otimes m} \\
    \end{equation}
    by taking $A_1 \otimes \dots \otimes A_n$ to the element $B_1 \otimes \dots \otimes B_m$ which has 
    $$
        B_j = \begin{cases} A_i & \text{if } j = \tau (i),\\ 1 & \text{otherwise}. \end{cases}
    $$
    The colimit of this diagram in $\CstPU$ is $\cont{\stat{\CStA}}$. 
\end{lma}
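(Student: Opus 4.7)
\medskip

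\noindent\textbf{Proof plan.} The strategy is to convert the colimit problem in $\CstPU$ into a limit problem in $\ConCom$, and then verify it \emph{pointwise} using the criterion of Lemma~\ref{thm: CCLcvx Pointwise Limit}. Precisely, by Lemma~\ref{thm: State Space Functor (Co)limit Reflection and Preservation} the functor $\statu \colon \op{\CstPU} \to \ConCom$ preserves and reflects limits, so it suffices to prove that applying $\statu$ to the proposed cocone gives a limit cone in $\ConCom$ for the $\op{\Iinj}$-indexed diagram whose vertices are $\stat{\Xon{n}}$ and whose arrows are the restriction maps $-\circ \eta_\tau$. The candidate limit vertex is $\stat{\cont{\stat{\CStA}}} \cong \rad{\stat{\CStA}}$, with cone maps $\omega_n \colon \rad{\stat{\CStA}} \to \stat{\Xon{n}}$ sending $\nu \mapsto \big(a \mapsto \int_{\rho\in\stat\CStA} \rho^{\otimes n}(a)\,\mathrm{d}\nu\big)$.

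The first step is to interpret the compatibility data: a compatible family is a collection $(\rho_n)_{n\in\mathbb N}$ of states $\rho_n \in \stat{\Xon{n}}$ such that $\rho_n = \rho_m \circ \eta_\tau$ for every injection $\tau\colon\{1,\dots,n\}\hookrightarrow\{1,\dots,m\}$. Every such $\tau$ factors as a standard inclusion $\iota_{nm}$ followed by a permutation in $\sym{m}$, so this condition is equivalent to the conjunction of (i) each $\rho_n$ is symmetric, and (ii) $\rho_n = \rho_m \circ \iota_{nm}$ for all $n \leq m$. By the Quantum Kolmogorov Extension Theorem (Thm.~\ref{thm: Quantum Kolmogorov Extension Theorem}), condition (ii) produces a unique state $\rho \in \stat{\Xoinf}$ with $\rho_n = \rho \circ \psi_n$, and then condition (i) makes $\rho$ exchangeable, i.e.\ $\rho \in \sstat{\CStA}$. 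Conversely, any exchangeable $\rho$ determines a compatible family via $\rho_n := \rho\circ\psi_n$, so the assignment $\rho \leftrightarrow (\rho_n)$ is a bijection between $\sstat{\CStA}$ and compatible families.

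The second step is to apply St\o rmer's theorem (Thm.~\ref{thm: Quantum De Finetti Theorem}), which supplies a bijection $\sstat{\CStA} \cong \rad{\stat{\CStA}}$ realised by the map $\nu \mapsto (a \mapsto \int \rho^{\otimes\infty}(a)\,\mathrm{d}\nu)$. Composing with the previous bijection, every compatible family $(\rho_n)$ arises from a unique $\nu \in \rad{\stat{\CStA}}$ via $\rho_n = \omega_n(\nu)$, since $\rho^{\otimes\infty}\circ\psi_n = \rho^{\otimes n}$. Before invoking Lemma~\ref{thm: CCLcvx Pointwise Limit}, one must also check that the $\omega_n$ genuinely form a cone in $\ConCom$: affineness and continuity follow from the integral formula and the definition of the topology on $\rad{\stat{\CStA}}$, and the cone identities $\omega_n = \omega_m \circ (-\circ \eta_\tau)$ reduce to the identity $\rho^{\otimes n} = \rho^{\otimes m}\circ \eta_\tau$, which is immediate from the symmetry of $\rho^{\otimes m}$ under permutation and partial trace against units.

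With the cone established and the pointwise universal property verified, Lemma~\ref{thm: CCLcvx Pointwise Limit} yields that $\rad{\stat{\CStA}}$ is the limit in $\ConCom$, hence by Lemma~\ref{thm: State Space Functor (Co)limit Reflection and Preservation} that $\cont{\stat{\CStA}}$ is the colimit in $\CstPU$. The main obstacle is the correct bookkeeping around the $\Iinj$ structure: one has to be careful that compatibility with every injection (including those whose image does not form an initial segment) is exactly the combination of Kolmogorov compatibility and symmetry, and that the cone morphisms produced at the $\ConCom$ level correspond, under the full-faithfulness of $\statu$ on the commutative side, to the intended cocone $\Xon{n}\to\cont{\stat{\CStA}}$, $a\mapsto (\rho\mapsto\rho^{\otimes n}(a))$, in $\CstPU$. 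Once these identifications are made, the universal property follows directly from Thms.~\ref{thm: Quantum Kolmogorov Extension Theorem} and~\ref{thm: Quantum De Finetti Theorem}.
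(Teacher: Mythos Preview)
Your proposal is correct and follows essentially the same route as the paper: pass to $\ConCom$ via the state-space functor, identify compatible families over $\op{\Iinj}$ with exchangeable states on $\Xoinf$ using the quantum Kolmogorov extension (symmetry $+$ restriction compatibility), invoke St{\o}rmer's theorem to identify these with $\rad{\stat{\CStA}}$, and then apply the pointwise-limit Lemma~\ref{thm: CCLcvx Pointwise Limit} and the reflection Lemma~\ref{thm: State Space Functor (Co)limit Reflection and Preservation}. The only cosmetic difference is that the paper first exhibits $\sstat{\CStA}$ as the pointwise limit and then transports along the St{\o}rmer isomorphism, whereas you take $\rad{\stat{\CStA}}$ as the candidate from the outset; the content is the same.
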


In other words, the space of exchangeable sequences of states is a limit of a diagram and is affinely isomorphic to the space of Radon probability measures on $S(\CStA)$.

\begin{proof}
    Størmer's proof of Thm.~\ref{thm: Quantum De Finetti Theorem}~\cite{Stormer1969} gives a bi-continuous, affine bijection $\sstat{\CStA} \cong \rad{\stat{\CStA}}$. In other words, the symmetric state space of $\Xoinf$ is isomorphic to $\rad{\stat{\CStA}}$ in $\ConCom$. All that is necessary then is to show that $\sstat{\CStA} \subset \stat{\Xoinf}$, with the morphisms $\rho \mapsto \rho \circ \psi_n$ (with $\psi_n$ as in \eqref{dgm:inftensor}) is the limit of the diagram above.
    We do this using Lma.\ref{thm: CCLcvx Pointwise Limit}.
    
    Via Thm.~\ref{thm: State Space Functor is Full and Faithful}, we can regard the diagram $\{\eta_\tau \colon \CStA^{\otimes n} \to \CStA^{\otimes m}\}$ of C$^*$-algebras as a diagram of state spaces $\{S(\eta_\tau) \colon S(\CStA^{\otimes m}) \to S(\CStA^{\otimes n})\}$ in $\ConCom$.
    Since any $\rho \in \sstat{\CStA}$ is symmetric we get that $\rho \mapsto \rho \circ \psi_n$  is a cone for $\{S(\eta_\tau)\}$.
    
     Any collection of states $\{ \rho_n \in \stat{\Xon{n}} \}_{n \in \mathbb N}$ which satisfies the diagram $\{S(\eta_\tau)\}$ also satisfies the diagram \ref{eqn: Kolmogorov Quantum Diagram} and thus we get a state on $\Xoinf$, $\rho \in \stat{\Xoinf}$, such that $\rho_n = \rho \circ \psi_n$ for all $n \in \mathbb N$. Since for any permutation $\sigma$ of $\{1, \dots, n\}$, $\rho_n \circ \eta_\sigma = \rho_n$, $\rho$ is symmetric. Thus, $\rho \in \sstat{\CStA}$ and $\sstat{\CStA}$ is a pointwise limit, and thus a limit for the diagram $\{S(\eta_\tau)\}$ in $\ConCom$, via Lma.~\ref{thm: CCLcvx Pointwise Limit}.
    
    That $\cont{\stat{\CStA}}$ is the colimit in the diagram in $\CstPU$ is a corollary of Lma.\ref{thm: State Space Functor (Co)limit Reflection and Preservation}: $\statu$ reflects limits.
\end{proof}

\begin{thm}[Quantum de Finetti Theorem as a Categorical Limit For Quantum Channels]\label{thm:qdf}
    The colimit of the diagram~\eqref{eqn:QdF-diagram} in $\CstCPU$ is $\cont{\stat{\CStA}}$.
\end{thm}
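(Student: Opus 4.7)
The plan is to bootstrap from Lemma~\ref{lemma:qdf}, which already establishes the analogous colimit result in the larger category $\CstPU$. Since $\CstCPU$ has the same objects as $\CstPU$ but keeps only the completely positive unital morphisms, I would only need to verify two things: first, that the candidate colimiting cocone from Lemma~\ref{lemma:qdf} consists of CPU (not merely PU) maps; and second, that for any CPU cocone into some $\CStB$, the unique mediating PU map provided by Lemma~\ref{lemma:qdf} is itself CPU.

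Both verifications will rest on a single observation recorded in Section~\ref{sec:tensors}: a positive unital map between $\Cs$-algebras is automatically completely positive whenever either its domain or its codomain is commutative. The cocone $\{\Xon{n} \to \cont{\stat{\CStA}}\}$ extracted from Lemma~\ref{lemma:qdf} has commutative codomain $\cont{\stat{\CStA}}$, so its components are CPU. Given any cocone $\{f_n \colon \Xon{n} \to \CStB\}_{n\in\mathbb N}$ in $\CstCPU$, I would forget it to a cocone in $\CstPU$, apply Lemma~\ref{lemma:qdf} to obtain the unique mediating PU map $g \colon \cont{\stat{\CStA}} \to \CStB$, and then note that because its domain is commutative, $g$ is in fact completely positive. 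Uniqueness in $\CstCPU$ is inherited directly from uniqueness in $\CstPU$, since the hom-set $\CstCPU(\cont{\stat{\CStA}}, \CStB)$ sits inside $\CstPU(\cont{\stat{\CStA}}, \CStB)$.

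The main obstacle is in fact not an obstacle at all: there is no additional analytic content beyond Lemma~\ref{lemma:qdf}. The upgrade from positive to completely positive maps is a formal consequence of the fact that $\cont{\stat{\CStA}}$ is commutative, which in turn reflects the conceptual point emphasized throughout the paper --- that exchangeability in the quantum setting forces the colimit to land on the classical side of \eqref{eqn:intro:cstar}. Passing through this commutative waypoint is what automatically makes both the cocone and the mediating map physically admissible as quantum channels, so the proof should amount to little more than assembling these observations in the right order.
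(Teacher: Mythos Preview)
Your proposal is correct and follows essentially the same route as the paper: bootstrap from Lemma~\ref{lemma:qdf} and use the commutativity of $\cont{\stat{\CStA}}$ to upgrade both the cocone maps and the mediating map from positive to completely positive. If anything, you spell out the mediating-map step more explicitly than the paper does; the only point you leave implicit that the paper mentions is that the diagram itself lives in $\CstCPU$ because the $\eta_\tau$ are $^\ast$-homomorphisms.
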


\begin{proof}
     Since all the maps in the diagram~\eqref{eqn:QdF-diagram} are $^\ast$-homomorphisms, in particular completely positive, the diagram factors through $\CstCPU$ and, because $\cont{\stat{\CStA}}$ is commutative, the colimiting maps are always completely positive. Thus from~Lma.~\ref{lemma:qdf}, $\cont{\stat{\CStA}}$ is also the colimit in $\CstCPU$.
\end{proof}
\section{Illustrations}\label{sec:illustration}
\newcommand{\ket}[1]{\ensuremath{\left|#1\right\rangle}} 

To illustrate the de Finetti construction, we focus temporarily on states of a qubit (i.e.~$\CStA=\BB(\CC^2)$). Our categorical quantum de Finetti theorem gives a universal property to the infinite dimensional space of all states ($C(S(\CStA))$), in terms of finite dimensional spaces ($\CC^2,\CC^4$ etc..). We can thus use conventional methods from finite dimensional categorical/diagrammatic quantum mechanics to analyze $C(S(\CStA))$. 

Since every state in 
$\BB(\CC^{(2^n)})$ corresponds to a quantum circuit with $n$ output qubits,
we can describe a sequence of states by giving a sequence of circuits.
For example, the following sequence of circuits describes a sequence of qubit states that is exchangeable (in that postcomposition with any permutation or discarding of the qubit wires respects the sequence).
For familiarity, we use a quantum circuit notation, but any diagrammatic notation with discarding could be used (e.g.~\cite{Carette2019}).

\usetikzlibrary{fit,decorations.pathreplacing,circuits.ee.IEC}  

\begin{align*}&\scalebox{0.8}{\begin{tikzpicture}[thick,circuit ee IEC,yscale=0.5,xscale=0.8]
    \tikzstyle{operator} = [draw,fill=white,minimum size=1.5em] 
    \tikzstyle{phase} = [fill,shape=circle,minimum size=5pt,inner sep=0pt]
    %
    \node at (-0.4,0) (q1a) {\ket{0}};
    \node at (-0.4,-1) (q2a) {\ket{0}};
    %
    %
    \node[operator] (op10a) at (1,0) {H} edge [-] (q1a);
    \node[phase] (op11a) at (2,0) {} edge [-] (op10a);
    \node[operator] (op12a) at (2,-1) {X} edge [-] (q2a);
    \draw[-] (op11a) -- (op12a);
    %
    \node[ground,right] (o1a)  at (3,0) {} edge [-] (op11a);
    \node (o2a)  at (3.4,-1) {} edge [-] (op12a);
    %
    \node at (5,0) (q1b) {\ket{0}};
    \node at (5,-1) (q2b) {\ket{0}};
    \node at (5,-2) (q3b) {\ket{0}};
    %
    \node[operator] (op10b) at (6,0) {H} edge [-] (q1b);
    %
    \node[phase] (op11b) at (7,0) {} edge [-] (op10b);
    \node[operator] (op12b) at (7,-1) {X} edge [-] (q2b);
    \draw[-] (op11b) -- (op12b);
    %
    \node[phase] (op21b) at (8,0) {} edge [-] (op11b);
    \node[operator] (op22b) at (8,-2) {X} edge [-] (q3b);
    \draw[-] (op21b) -- (op22b);
    %
    \node[ground,right] (o1b)  at (9,0) {} edge [-] (op11b);
    \node (o2b)  at (9.4,-1) {} edge [-] (op12b);
    \node (o3b)  at (9.4,-2) {} edge [-] (op22b);
    %
    \node at (11,0) (q1c) {\ket{0}};
    \node at (11,-1) (q2c) {\ket{0}};
    \node at (11,-2) (q3c) {\ket{0}};
    \node at (11,-3) (q4c) {\ket{0}};
    %
    \node[operator] (op10c) at (12,0) {H} edge [-] (q1c);
    %
    \node[phase] (op11c) at (13,0) {} edge [-] (op10c);
    \node[operator] (op12c) at (13,-1) {X} edge [-] (q2c);
    \draw[-] (op11c) -- (op12c);
    %
    \node[phase] (op21c) at (14,0) {} edge [-] (op11c);
    \node[operator] (op22c) at (14,-2) {X} edge [-] (q3c);
    \draw[-] (op21c) -- (op22c);
    %
    \node[phase] (op31c) at (15,0) {} edge [-] (op21c);
    \node[operator] (op34c) at (15,-3) {X} edge [-] (q4c);
    \draw[-] (op31c) -- (op34c);
    %
    \node[ground,right] (o1c)  at (16,0) {} edge [-] (op11c);
    \node (o2c)  at (16.4,-1) {} edge [-] (op12c);
    \node (o3c)  at (16.4,-2) {} edge [-] (op22c);
    \node (o4c)  at (16.4,-3) {} edge [-] (op34c);
    \node (dots) at (17,-1.5) {\dots};
    \end{tikzpicture}}
                \intertext{
                This exchangeable sequence corresponds to a belief about a qubit state: that the state is pure, and is either $\ket 0$ (with probability $0.5$) or $\ket 1$ (with probability $0.5$). The following sequence is also exchangeable:}
  &\scalebox{0.8}{\begin{tikzpicture}[thick,circuit ee IEC,yscale=0.5,xscale=0.8]
    \tikzstyle{operator} = [draw,fill=white,minimum size=1.5em] 
    \tikzstyle{phase} = [fill,shape=circle,minimum size=5pt,inner sep=0pt]
    %
    \node at (-0.4,0) (q1a) {\ket{0}};
    \node at (-0.4,-1) (q2a) {\ket{0}};
    %
    %
    \node[operator] (op10a) at (1,0) {H} edge [-] (q1a);
    \node[phase] (op11a) at (2,0) {} edge [-] (op10a);
    \node[operator] (op12a) at (2,-1) {X} edge [-] (q2a);
    \draw[-] (op11a) -- (op12a);
    %
    \node[ground,right] (o1a)  at (3,0) {} edge [-] (op11a);
    \node (o2a)  at (3.4,-1) {} edge [-] (op12a);
    %
    \node at (5,0) (q1b) {\ket{0}};
    \node at (5,-1) (q2b) {\ket{0}};
    \node at (5,-2) (q3b) {\ket{0}};
    \node at (5,-3) (q4b) {\ket{0}};
    %
    \node[operator] (op01b) at (6,0) {H} edge [-] (q1b);
    \node[operator] (op03b) at (6,-2) {H} edge [-] (q3b);
    %
    \node[phase] (op11b) at (7,0) {} edge [-] (op01b);
    \node[operator] (op12b) at (7,-1) {X} edge [-] (q2b);
    \draw[-] (op11b) -- (op12b);
    \node[phase] (op13b) at (7,-2) {} edge [-] (op03b);
    \node[operator] (op14b) at (7,-3) {X} edge [-] (q4b);
    \draw[-] (op13b) -- (op14b);
    \node[ground,right] (o1b)  at (8,0) {} edge [-] (op11b);
    \node[ground,right] (o3b)  at (8,-2) {} edge [-] (op13b);
    \node (o2b)  at (8.4,-1) {} edge [-] (op12b);
    \node (o4b)  at (8.4,-3) {} edge [-] (op14b);
    %
    \node at (11,0) (q1c) {\ket{0}};
    \node at (11,-1) (q2c) {\ket{0}};
    \node at (11,-2) (q3c) {\ket{0}};
    \node at (11,-3) (q4c) {\ket{0}};
    \node at (11,-4) (q5c) {\ket{0}};
    \node at (11,-5) (q6c) {\ket{0}};
    %
    \node[operator] (op01c) at (12,0) {H} edge [-] (q1c);
    \node[operator] (op03c) at (12,-2) {H} edge [-] (q3c);
    \node[operator] (op05c) at (12,-4) {H} edge [-] (q5c);
    %
    \node[phase] (op11c) at (13,0) {} edge [-] (op01c);
    \node[operator] (op12c) at (13,-1) {X} edge [-] (q2c);
    \draw[-] (op11c) -- (op12c);
    \node[phase] (op13c) at (13,-2) {} edge [-] (op03c);
    \node[operator] (op14c) at (13,-3) {X} edge [-] (q4c);
    \draw[-] (op13c) -- (op14c);
    \node[phase] (op15c) at (13,-4) {} edge [-] (op05c);
    \node[operator] (op16c) at (13,-5) {X} edge [-] (q6c);
    \draw[-] (op15c) -- (op16c);
    %
    \node[ground,right] (o1c)  at (14,0) {} edge [-] (op11c);
    \node[ground,right] (o3c)  at (14,-2) {} edge [-] (op13c);
    \node[ground,right] (o5c)  at (14,-4) {} edge [-] (op15c);
    \node (o2c)  at (14.4,-1) {} edge [-] (op12c);
    \node (o4c)  at (14.4,-3) {} edge [-] (op14c);
    \node (o6c)  at (14.4,-5) {} edge [-] (op16c);
    \node (dots) at (17,-2.5) {\dots};
    \end{tikzpicture}}\end{align*}
    But this sequence corresponds to a different belief about a qubit state: that the state is definitely the mixed state with density matrix $\frac 12 (\begin{smallmatrix}1 & 0 \\ 0 & 1\end{smallmatrix})$. In each case, the overall sequence of states is equivalent to the process of  drawing a state at random from a classical distribution and then repeating it.
    
    There are many other exchangeable sequences of qubit states. For example, there is one corresponding to the belief that a qubit is in a pure state somewhere on the equator of the Bloch sphere, $\frac 1{\sqrt 2}(\ket 0+e^{i\phi} \ket 1)$, with $\phi$ uniformly distributed in $[0,2\pi]$. There is also one corresponding to the belief that a qubit is in a totally unknown state, i.e.~density matrix $\frac{r^{\frac 13}}2 (\begin{smallmatrix}1+z&\sqrt{1-z^2}e^{-i\theta}\\\sqrt{1-z^2}e^{i\theta}&1-z\end{smallmatrix})$ with $r,z,\theta$ uniformly distributed in $[0,1]$, $[-1,1]$ and $[0,2\pi]$ respectively. 
    
    Our categorical version of the quantum de Finetti theorem allows us to also consider sequences with parameters.  
    Since every completely positive unital map 
    $\BB(\CC^{(2^n)})\to \BB(\CC^{(2^m)})$ corresponds to a quantum circuit with $n$ output qubits and $m$ input qubits,
    we can describe a cone with apex $\BB(\CC^{(2^m)})$ by giving a sequence of circuits with $m$ inputs.

For example, the following sequence of circuits describes a cone with apex $\BB(\CC^2)$. 
\[\scalebox{0.8}{\begin{tikzpicture}[thick,circuit ee IEC,yscale=0.5,xscale=0.8]
    \tikzstyle{operator} = [draw,fill=white,minimum size=1.5em] 
    \tikzstyle{phase} = [fill,shape=circle,minimum size=5pt,inner sep=0pt]
    %
    \node at (0,0) (q1a) {\textit{input}};
    \node at (1,-1) (q2a) {\ket{0}};
    \node[phase] (op11a) at (2,0) {} edge [-] (q1a);
    \node[operator] (op12a) at (2,-1) {X} edge [-] (q2a);
    \draw[-] (op11a) -- (op12a);
    %
    \node[ground,right] (o1a)  at (3,0) {} edge [-] (op11a);
    \node (o2a)  at (3.4,-1) {} edge [-] (op12a);
    %
    \node at (5,0) (q1b) {\textit{input}};
    \node at (6,-1) (q2b) {\ket{0}};
    \node at (6,-2) (q3b) {\ket{0}};
    %
    \node[phase] (op11b) at (7,0) {} edge [-] (q1b);
    \node[operator] (op12b) at (7,-1) {X} edge [-] (q2b);
    \draw[-] (op11b) -- (op12b);
    %
    \node[phase] (op21b) at (8,0) {} edge [-] (op11b);
    \node[operator] (op22b) at (8,-2) {X} edge [-] (q3b);
    \draw[-] (op21b) -- (op22b);
    %
    \node[ground,right] (o1b)  at (9,0) {} edge [-] (op11b);
    \node (o2b)  at (9.4,-1) {} edge [-] (op12b);
    \node (o3b)  at (9.4,-2) {} edge [-] (op22b);
    %
    \node at (11,0) (q1c) {\textit{input}};
    \node at (12,-1) (q2c) {\ket{0}};
    \node at (12,-2) (q3c) {\ket{0}};
    \node at (12,-3) (q4c) {\ket{0}};
    %
    \node[phase] (op11c) at (13,0) {} edge [-] (q1c);
    \node[operator] (op12c) at (13,-1) {X} edge [-] (q2c);
    \draw[-] (op11c) -- (op12c);
    %
    \node[phase] (op21c) at (14,0) {} edge [-] (op11c);
    \node[operator] (op22c) at (14,-2) {X} edge [-] (q3c);
    \draw[-] (op21c) -- (op22c);
    %
    \node[phase] (op31c) at (15,0) {} edge [-] (op21c);
    \node[operator] (op34c) at (15,-3) {X} edge [-] (q4c);
    \draw[-] (op31c) -- (op34c);
    %
    \node[ground,right] (o1c)  at (16,0) {} edge [-] (op11c);
    \node (o2c)  at (16.4,-1) {} edge [-] (op12c);
    \node (o3c)  at (16.4,-2) {} edge [-] (op22c);
    \node (o4c)  at (16.4,-3) {} edge [-] (op34c);
    \node (dots) at (17,-1.5) {\dots};
    \end{tikzpicture}}
\]
This corresponds to a belief that the quantum state is pure, and either $\ket 0$ or $\ket 1$, with the probability determined by a standard basis measurement of the input state. Indeed, the categorical quantum de Finetti theorem (Thm.~\ref{thm:qdf}) says that every exchangeable sequence of circuits is equivalent to a sequence where each circuit first measures all the input qubits, resulting in random classical data, and then generates a quantum state depending on this classical outcome. 

\section{Aside on the Hewitt-Savage de Finetti Theorem as a Categorical Limit}
As an aside, we note that de Finetti theorem for classical probability (Thm.~\ref{thm: Hewitt Savage De Finetti Theorem}) now also arises  as a categorical limit. We express it first in the category $\ConCom$ of compact convex spaces, but then state it in terms of the Kleisli category of the Radon monad to show the similarity with the previous result in~\cite{JacobsStaton2020}.

\begin{thm}[Categorical Hewitt-Savage De Finetti Theorem]\label{thm: Categorical Hewitt-Savage De Finetti Theorem}
                Let $X$ be a compact Hausdorff space. Consider the diagram $\op{\Iinj} \to \ConCom$ which takes $\{1 , \dots, n\}$ to $\rad{X^n}$ and an injective function $\tau \colon  \{1 , \dots, n\} \to \{ 1, \dots , m \}$ to $\zeta_\tau \colon \rad{X^m} \to \rad{X^n}$, defined by $\zeta_\tau(\mu)(A) = \mu(\Tilde A_\tau)$ for $$
                \Tilde A_\tau := \left\{ \left(x_1 ,\dots, x_m\right) \in X^m \, \vert \, \left(x_{\tau(1)},x_{\tau(2)},\dots, x_{\tau(n)}\right)\in A \right\}.
                $$
                The limit of this diagram is the space of exchangeable measures on $X^\mathbb N$, and is isomorphic to $\rad{\rad{X}}$, where the maps $\rad{\rad{X}} \to \rad{X^n}$ take a measure $\Phi$ on $\rad{X}$ to the measure
                $$
                    A \in \borel{X^n} \mapsto \int_{\mu \in \rad{X}} \underbrace{\mu\times \dots \times \mu}_{n \text{ times}}(A) \, \mathrm{d}\Phi.
                $$
\end{thm}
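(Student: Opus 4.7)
The plan is to mirror the proof of Lemma~\ref{lemma:qdf} on the classical side, using the pointwise criterion of Lemma~\ref{thm: CCLcvx Pointwise Limit} together with the Kolmogorov extension theorem (Thm.~\ref{thm: Kolmogorov Extension Theorem}) and the Hewitt--Savage theorem (Thm.~\ref{thm: Hewitt Savage De Finetti Theorem}). Let $E \subseteq \rad{X^\mathbb N}$ denote the space of exchangeable Radon probability measures; as a closed, convex subspace of the compact Hausdorff space $\rad{X^\mathbb N}$, it belongs to $\ConCom$. For each $n$, pushforward along the standard projection $\pi_n \colon X^\mathbb N \to X^n$ restricts to an affine continuous map $\omega_n \colon E \to \rad{X^n}$. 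A direct computation of inverse images shows that for any injection $\tau \colon \{1,\dots,n\} \to \{1,\dots,m\}$, the composite $\zeta_\tau \circ \omega_m$ sends $\mu$ to the pushforward of $\mu$ along $(x_i)_{i\in\mathbb N} \mapsto (x_{\tau(1)}, \dots, x_{\tau(n)})$, and this coincides with $\omega_n(\mu)$ by exchangeability of $\mu$. Hence $\{\omega_n\}_n$ is a cone.

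Next I would verify the pointwise limit property. Suppose $(\mu_n \in \rad{X^n})_n$ is a compatible family for the $\op{\Iinj}$-diagram. Restricting to the inclusions $\{1,\dots,n\} \hookrightarrow \{1,\dots,m\}$ yields precisely the compatibility hypothesis of Kolmogorov extension, so there is a unique $\mu \in \rad{X^\mathbb N}$ with $(\pi_n)_* \mu = \mu_n$ for all $n$. Compatibility with the remaining injections --- equivalently, with all permutations $\sigma \in \sym n$ --- forces each $\mu_n$ to be symmetric. Applying Kolmogorov uniqueness to the pushforward of $\mu$ by any finite coordinate permutation of $X^\mathbb N$, combined with symmetry of the $\mu_n$, shows that $\mu$ itself is exchangeable, so $\mu \in E$. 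Lemma~\ref{thm: CCLcvx Pointwise Limit} thus identifies $E$ as the limit of the diagram in $\ConCom$.

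Finally, to identify $E$ with $\rad{\rad X}$, I would invoke Hewitt--Savage. Define $\Psi \colon \rad{\rad X} \to E$ by $\Phi \mapsto \bigl(A \mapsto \int_{p \in \rad X} \tilde p(A)\,\mathrm d\Phi\bigr)$. This map lands in $E$ by construction of $\tilde p$, is affine in $\Phi$ by linearity of integration, and is continuous because the topologies on $\rad{\rad X}$, $E$, and $\rad{X^n}$ are all generated by integration against continuous test functions. Theorem~\ref{thm: Hewitt Savage De Finetti Theorem} gives that $\Psi$ is a bijection, and a continuous bijection between compact Hausdorff spaces is automatically a homeomorphism, hence an isomorphism in $\ConCom$. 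Composing $\Psi$ with $\omega_n$ recovers the explicit formula in the statement, since $(\pi_n)_* \tilde p = p \times \dots \times p$ ($n$ times) by the definition of $\tilde p$ on cylinder sets in Thm.~\ref{thm: Kolmogorov Extension Theorem}.

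The main technical obstacle, such as it is, lies in showing that $\Psi$ is a morphism in $\ConCom$ rather than merely the set-theoretic bijection supplied by Hewitt--Savage: affinity is immediate from linearity of the integral, and continuity reduces to a routine check against the generating opens in the weak topology. Beyond this, the argument is pure bookkeeping against the categorical machinery of Lemmas~\ref{thm: State Space Functor (Co)limit Reflection and Preservation} and \ref{thm: CCLcvx Pointwise Limit}, which is exactly the point: as in the quantum case, the non-categorical measure-theoretic content of the classical theorems is doing all the probabilistic work.
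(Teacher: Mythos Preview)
Your argument is correct, but it takes a different route from the paper. The paper's proof is a one-liner: instantiate the quantum result (Theorem~\ref{thm:qdf}) with the commutative algebra $\CStA=\cont{X}$, and use $\cont{X}\spotimes\cont{Y}\cong\cont{X\times Y}$ together with $\stat{\cont{X}}\cong\rad{X}$ to transport the diagram, its colimit $\cont{\stat{\CStA}}\cong\cont{\rad X}$, and the limiting cone in $\ConCom$ back to the statement as written. You instead rerun the \emph{proof} of Lemma~\ref{lemma:qdf} on the classical side, replacing the quantum Kolmogorov and St{\o}rmer theorems by their classical counterparts (Theorems~\ref{thm: Kolmogorov Extension Theorem} and~\ref{thm: Hewitt Savage De Finetti Theorem}) and invoking the pointwise criterion of Lemma~\ref{thm: CCLcvx Pointwise Limit} directly. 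The paper's approach is conceptually tidier and highlights that the classical statement is literally a special case of the quantum one; your approach is self-contained and avoids routing through St{\o}rmer's theorem, at the cost of redoing the bookkeeping.

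One small point to tighten: you assert that Theorem~\ref{thm: Hewitt Savage De Finetti Theorem} makes $\Psi$ a bijection, but as stated in this paper that theorem only furnishes \emph{existence} of the representing measure $\nu$, i.e.\ surjectivity of $\Psi$. Injectivity requires uniqueness of $\nu$, which is of course part of the classical Hewitt--Savage package (the exchangeable measures form a Choquet simplex with extreme points the i.i.d.\ measures), but you should either cite that explicitly or observe that injectivity of $\Psi$ already follows from your pointwise-limit identification of $E$: distinct $\Phi_1,\Phi_2\in\rad{\rad X}$ must push forward to distinct families $(\omega_n\circ\Psi)(\Phi_i)$ at some finite stage, since the moment maps $\Phi\mapsto\int p^{\times n}(-)\,\mathrm d\Phi$ separate points of $\rad{\rad X}$.
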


\begin{proof}
    This follows from instantiating Theorem~\ref{thm:qdf} with the commutative $\Cs$-algebra $\cont{X}$ and noting that $\cont{X} \spotimes \cont{Y}\cong \cont{X \times Y}$.
  \end{proof}

  To emphasise the connection with~\cite{JacobsStaton2020}, we write $X\klto Y$ for a Kleisli morphism $X\to \radu(Y)$. 

\begin{cor}[Categorical de Finetti Theorem in $\kl{\radu}$]
    For some $X \in \CH$, consider the diagram $\op{\Iinj} \to \kl{\radu}$ into the Kleisli category of the Radon monad, which takes $\{1 , \dots, n\}$ to $X^n$ and each injection $\tau \colon  \{1 , \dots, n\} \to \{ 1, \dots , m \}$ to the Kleisli-map $\eta_\tau \colon X^m \klto X^n$ given by
    $$
        \eta_\tau (x_1, \dots, x_m) = \delta_{\left(x_{\tau(1)},x_{\tau(2)},\dots, x_{\tau(n)}\right)}
    $$
    This diagram has limit $\rad{X}$ in $\kl{\radu}$, with maps $\iid{n} \colon \rad{X} \klto X^n$ given, for measurable $A \subset X^n$, by
    $$
        \iid{n}(\mu)(A) = (\underbrace{\mu\times \dots \times \mu}_{n \text{ times}})(A).
    $$
\end{cor}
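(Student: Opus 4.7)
The plan is to deduce this corollary directly from Theorem~\ref{thm: Categorical Hewitt-Savage De Finetti Theorem} by transporting it across the standard comparison functor $\kl{\radu} \hookrightarrow \Alg{\radu}$. By Theorem~\ref{thm:concom-monadic}, $\Alg{\radu} \simeq \ConCom$, and it is a general fact that $\kl{\radu}$ is equivalent to the full subcategory of $\Alg{\radu}$ spanned by the free algebras $\rad{X}$. The comparison sends a Kleisli morphism $f \colon X \klto Y$ to its Kleisli extension $f^\sharp = \mathit{mult} \circ \rad{f} \colon \rad{X} \to \rad{Y}$, which acts as $f^\sharp(\mu)(A) = \int_X f(x)(A)\, \mathrm{d}\mu(x)$.

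First I would verify that this embedding carries the diagram of the corollary to exactly the diagram of Theorem~\ref{thm: Categorical Hewitt-Savage De Finetti Theorem}. A short calculation from the definition of $\eta_\tau$ yields $\eta_\tau^\sharp(\mu)(A) = \mu(\Tilde A_\tau) = \zeta_\tau(\mu)(A)$, and similarly the Kleisli extension of $\iid{n}$ is $\Phi \mapsto \bigl(A \mapsto \int \mu^n(A)\, \mathrm{d}\Phi\bigr)$, exactly the map named in Theorem~\ref{thm: Categorical Hewitt-Savage De Finetti Theorem}. Thus the image of the corollary's diagram and its proposed cone coincide with the diagram and limiting cone already established in $\ConCom$.

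By Theorem~\ref{thm: Categorical Hewitt-Savage De Finetti Theorem}, the image diagram has limit $\rad{\rad{X}}$ in $\ConCom$ with precisely these maps. Since $\rad{\rad{X}}$ is itself a free algebra (on $\rad{X}$), the limiting object lies in the image of the full embedding $\kl{\radu} \hookrightarrow \ConCom$. I then invoke the general categorical fact that a full subcategory reflects any limit of its own diagrams whose limiting object lies in the subcategory: fullness produces a mediating morphism in $\kl{\radu}$ from the unique $\ConCom$-mediator, and faithfulness ensures uniqueness. Hence $\rad{X}$ with the cone $\{\iid{n}\}$ is a limit in $\kl{\radu}$.

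The only substantive step is the explicit identification of $\eta_\tau^\sharp$ and $\iid{n}^\sharp$; I do not anticipate any real obstacle, since the corollary is essentially a Kleisli-side restatement of Theorem~\ref{thm: Categorical Hewitt-Savage De Finetti Theorem} made accessible by the adjunction $\radu' \dashv U$.
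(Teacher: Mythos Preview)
Your proposal is correct and is essentially the same argument as the paper's: the paper's proof is the one-line observation that $\kl{\radu}$ is a full subcategory of $\Alg{\radu}\cong\ConCom$ and that the limit of Theorem~\ref{thm: Categorical Hewitt-Savage De Finetti Theorem} is a free algebra, hence reflected. You have merely unpacked this by explicitly identifying the comparison functor, checking that $\eta_\tau^\sharp=\zeta_\tau$ and $\iid{n}^\sharp$ matches the cone map, and spelling out the full-and-faithful reflection argument.
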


\begin{proof}
    The limit in theorem~\ref{thm: Categorical Hewitt-Savage De Finetti Theorem} is reflected into $\kl{\radu}$, as this category is a full subcategory of $\Alg{\radu} \cong \ConCom$ and the limit itself is a free algebra.
\end{proof}

The notation $\iid{n}$ arises because $\underbrace{\mu\times \dots \times \mu}_{n \text{ times}}$ describes independent and identical distributions $\mu$ on $n$ copies of $X$. 
\section{Concluding remarks}

We have shown that the quantum de Finetti theorem amounts to a categorical limit for a diagram $\CStA^{\otimes-}:\op\Iinj\to \op\CstCPU$ (Theorem~\ref{thm:qdf}). 
This puts the quantum de Finetti theorem, a cornerstone of quantum Bayesianism and a starting point for quantum tomography, in the setting of categorical and diagrammatic quantum theory. We have focused on $\Cs$-algebras, but the set-up is relevant more broadly. Recall that an \emph{affine monoidal category} is a symmetric monoidal category with a terminal unit, and these are argued to be causal models of quantum theory (e.g.~\cite{Coecke2014}). Now, $\op\Iinj$ is the \emph{free} affine monoidal category on one generator (e.g.~\cite{MalherbeScottSelinger2013}), and so $\CStA^{\otimes-}:\op\Iinj\to \op{\CstCPU}$ is canonical for $\CStA$. So we can consider de Finetti limits in other models of quantum theory, perhaps making a bridge to the test space analysis of~\cite{BarrettLeifer2009}.

\paragraph{Acknowledgements.}
One starting point for this work was a discussion at the 2020 Workshop on Categorical Probability and Statistics, and we thank the organizers and participants at that workshop, in particular Robert Furber, Bart Jacobs, Arthur Parzygnat and Robert Spekkens. Thanks too to the Oxford group and anonymous reviewers. 

Research supported by AFOSR award number FA9550-21-1-0038; the
ERC BLAST grant; and a Royal Society University Research Fellowship.

\bibliographystyle{eptcs}
\bibliography{generic}

\begin{thebibliography}{10}
\providecommand{\bibitemdeclare}[2]{}
\providecommand{\surnamestart}{}
\providecommand{\surnameend}{}
\providecommand{\urlprefix}{Available at }
\providecommand{\url}[1]{\texttt{#1}}
\providecommand{\href}[2]{\texttt{#2}}
\providecommand{\urlalt}[2]{\href{#1}{#2}}
\providecommand{\doi}[1]{doi:\urlalt{https://doi.org/#1}{#1}}
\providecommand{\eprint}[1]{arXiv:\urlalt{https://arxiv.org/abs/#1}{#1}}
\providecommand{\bibinfo}[2]{#2}

\bibitemdeclare{book}{Alfsen2001}
\bibitem{Alfsen2001}
\bibinfo{author}{Erik~M. \surnamestart Alfsen\surnameend} \&
  \bibinfo{author}{Frederic~W. \surnamestart Shultz\surnameend}
  (\bibinfo{year}{2001}): \emph{\bibinfo{title}{{State Spaces of Operator
  Algebras}}}, \bibinfo{edition}{1} edition.
\newblock \bibinfo{publisher}{Birkh{\"{a}}user Boston},
  \bibinfo{address}{Boston, MA}, \doi{10.1007/978-1-4612-0147-2}.
\newblock \urlprefix\url{http://link.springer.com/10.1007/978-1-4612-0147-2}.

\bibitemdeclare{article}{BarrettLeifer2009}
\bibitem{BarrettLeifer2009}
\bibinfo{author}{Jonathan \surnamestart Barrett\surnameend} \&
  \bibinfo{author}{Matthew \surnamestart Leifer\surnameend}
  (\bibinfo{year}{2009}): \emph{\bibinfo{title}{The de {{Finetti}} Theorem for
  Test Spaces}}.
\newblock {\slshape \bibinfo{journal}{New Journal of Physics}}
  \bibinfo{volume}{11}, pp. \bibinfo{pages}{1--10},
  \doi{10.1088/1367-2630/11/3/033024}.
\newblock \eprint{0712.2265}.

\bibitemdeclare{article}{Carette2019}
\bibitem{Carette2019}
\bibinfo{author}{Titouan \surnamestart Carette\surnameend},
  \bibinfo{author}{Emmanuel \surnamestart Jeandel\surnameend},
  \bibinfo{author}{Simon \surnamestart Perdrix\surnameend} \&
  \bibinfo{author}{Renaud \surnamestart Vilmart\surnameend}
  (\bibinfo{year}{2021}): \emph{\bibinfo{title}{Completeness of Graphical
  Languages for Mixed State Quantum Mechanics}}.
\newblock {\slshape \bibinfo{journal}{ACM Transactions on Quantum Computing}}
  \bibinfo{volume}{2}(\bibinfo{number}{4}), \doi{10.1145/3464693}.

\bibitemdeclare{article}{CavesFuchsSchack2001}
\bibitem{CavesFuchsSchack2001}
\bibinfo{author}{Carlton~M. \surnamestart Caves\surnameend},
  \bibinfo{author}{Christopher~A. \surnamestart Fuchs\surnameend} \&
  \bibinfo{author}{R{\"u}diger \surnamestart Schack\surnameend}
  (\bibinfo{year}{2002}): \emph{\bibinfo{title}{Unknown Quantum States: {{The}}
  Quantum de {{Finetti}} Representation}}.
\newblock {\slshape \bibinfo{journal}{Journal of Mathematical Physics}}
  \bibinfo{volume}{43}(\bibinfo{number}{9}), pp. \bibinfo{pages}{4537--4559},
  \doi{10.1063/1.1494475}.
\newblock \eprint{quant-ph/0104088}.

\bibitemdeclare{article}{Coecke2014}
\bibitem{Coecke2014}
\bibinfo{author}{Bob \surnamestart Coecke\surnameend} (\bibinfo{year}{2014}):
  \emph{\bibinfo{title}{Terminality Implies Non-Signalling}}.
\newblock {\slshape \bibinfo{journal}{Electronic Proceedings in Theoretical
  Computer Science}} \bibinfo{volume}{172}, pp. \bibinfo{pages}{27--35},
  \doi{10.4204/EPTCS.172.3}.

\bibitemdeclare{article}{CoeckeFritzSpekkens2016}
\bibitem{CoeckeFritzSpekkens2016}
\bibinfo{author}{Bob \surnamestart Coecke\surnameend}, \bibinfo{author}{Tobias
  \surnamestart Fritz\surnameend} \& \bibinfo{author}{Robert~W. \surnamestart
  Spekkens\surnameend} (\bibinfo{year}{2016}): \emph{\bibinfo{title}{A
  Mathematical Theory of Resources}}.
\newblock {\slshape \bibinfo{journal}{Information and Computation}}
  \bibinfo{volume}{250}, pp. \bibinfo{pages}{59--86},
  \doi{10.1016/j.ic.2016.02.008}.
\newblock \eprint{1409.5531}.

\bibitemdeclare{article}{FritzGondaPerrone2021}
\bibitem{FritzGondaPerrone2021}
\bibinfo{author}{Tobias \surnamestart Fritz\surnameend},
  \bibinfo{author}{Tom{\'a}{\v s} \surnamestart Gonda\surnameend} \&
  \bibinfo{author}{Paolo \surnamestart Perrone\surnameend}
  (\bibinfo{year}{2021}): \emph{\bibinfo{title}{De {{Finetti}}'s {{Theorem}} in
  {{Categorical Probability}}}}.
\newblock {\slshape \bibinfo{journal}{Journal of Stochastic Analysis}}
  \bibinfo{volume}{2}(\bibinfo{number}{4}), \doi{10.31390/josa.2.4.06}.
\newblock \eprint{2105.02639}.

\bibitemdeclare{incollection}{FuchsSchack2004}
\bibitem{FuchsSchack2004}
\bibinfo{author}{Christopher~A. \surnamestart Fuchs\surnameend} \&
  \bibinfo{author}{R{\"u}diger \surnamestart Schack\surnameend}
  (\bibinfo{year}{2004}): \emph{\bibinfo{title}{Unknown {{Quantum States}} and
  {{Operations}}, a {{Bayesian View}}}}.
\newblock In \bibinfo{editor}{Matteo \surnamestart Paris\surnameend} \&
  \bibinfo{editor}{Jaroslav \surnamestart {\v R}eh{\'a}{\v c}ek\surnameend},
  editors: {\slshape \bibinfo{booktitle}{Quantum {{State Estimation}}}},
  \bibinfo{volume}{649}, \bibinfo{publisher}{{Springer Berlin Heidelberg}},
  \bibinfo{address}{{Berlin, Heidelberg}}, pp. \bibinfo{pages}{147--187},
  \doi{10.1007/b98673}.
\newblock \eprint{quant-ph/0404156}.

\bibitemdeclare{article}{Furber2015}
\bibitem{Furber2015}
\bibinfo{author}{Robert \surnamestart Furber\surnameend} \&
  \bibinfo{author}{Bart \surnamestart Jacobs\surnameend}
  (\bibinfo{year}{2015}): \emph{\bibinfo{title}{{From Kleisli Categories to
  Commutative C*-algebras: Probabilistic Gelfand Duality}}}.
\newblock {\slshape \bibinfo{journal}{Logical Methods in Computer Science}}
  \bibinfo{volume}{11}(\bibinfo{number}{2}), pp. \bibinfo{pages}{1--28},
  \doi{10.2168/lmcs-11(2:5)2015}.

\bibitemdeclare{unpublished}{Guichardet1969a}
\bibitem{Guichardet1969a}
\bibinfo{author}{Alain \surnamestart Guichardet\surnameend}
  (\bibinfo{year}{1969}): \emph{\bibinfo{title}{{Tensor Products of C*Algebras
  Part II. Infinite Tensor Products}}}.
\newblock \bibinfo{note}{Retrieved September 18 2023 from
  \url{https://ncatlab.org/schreiber/files/GuichardetTensorProduct.pdf}}.

\bibitemdeclare{article}{HeunenKissingerSelinger2014}
\bibitem{HeunenKissingerSelinger2014}
\bibinfo{author}{Chris \surnamestart Heunen\surnameend}, \bibinfo{author}{Aleks
  \surnamestart Kissinger\surnameend} \& \bibinfo{author}{Peter \surnamestart
  Selinger\surnameend} (\bibinfo{year}{2014}): \emph{\bibinfo{title}{Completely
  Positive Projections and Biproducts}}.
\newblock {\slshape \bibinfo{journal}{Electronic Proceedings in Theoretical
  Computer Science}} \bibinfo{volume}{171}, pp. \bibinfo{pages}{71--83},
  \doi{10.4204/EPTCS.171.7}.

\bibitemdeclare{article}{Hewitt1955}
\bibitem{Hewitt1955}
\bibinfo{author}{Edwin \surnamestart Hewitt\surnameend} \&
  \bibinfo{author}{Leonard~J. \surnamestart Savage\surnameend}
  (\bibinfo{year}{1955}): \emph{\bibinfo{title}{{Symmetric Measures on
  Cartesian Products}}}.
\newblock {\slshape \bibinfo{journal}{Transactions of the American Mathematical
  Society}} \bibinfo{volume}{80}(\bibinfo{number}{2}), p. \bibinfo{pages}{470},
  \doi{10.2307/1992999}.

\bibitemdeclare{article}{Hudson1981}
\bibitem{Hudson1981}
\bibinfo{author}{R.~L. \surnamestart Hudson\surnameend} (\bibinfo{year}{1981}):
  \emph{\bibinfo{title}{{Analogs of de Finetti's theorem and interpretative
  problems of quantum mechanics}}}.
\newblock {\slshape \bibinfo{journal}{Foundations of Physics}}
  \bibinfo{volume}{11}(\bibinfo{number}{9-10}), pp. \bibinfo{pages}{805--808},
  \doi{10.1007/BF00726951}.

\bibitemdeclare{article}{Hudson1976}
\bibitem{Hudson1976}
\bibinfo{author}{R.~L. \surnamestart Hudson\surnameend} \&
  \bibinfo{author}{G.~R. \surnamestart Moody\surnameend}
  (\bibinfo{year}{1976}): \emph{\bibinfo{title}{{Locally normal symmetric
  states and an analogue of de Finetti's theorem}}}.
\newblock {\slshape \bibinfo{journal}{Z.~Wahrschein.~verw.~Geb.}}
  \bibinfo{volume}{33}(\bibinfo{number}{4}), pp. \bibinfo{pages}{343--351},
  \doi{10.1007/BF00534784}.

\bibitemdeclare{article}{Huot2018}
\bibitem{Huot2018}
\bibinfo{author}{Mathieu \surnamestart Huot\surnameend} \& \bibinfo{author}{Sam
  \surnamestart Staton\surnameend} (\bibinfo{year}{2019}):
  \emph{\bibinfo{title}{Universal {{Properties}} in {{Quantum Theory}}}}.
\newblock {\slshape \bibinfo{journal}{Electronic Proceedings in Theoretical
  Computer Science}} \bibinfo{volume}{287}, pp. \bibinfo{pages}{213--223},
  \doi{10.4204/EPTCS.287.12}.

\bibitemdeclare{inproceedings}{JacobsStaton2020}
\bibitem{JacobsStaton2020}
\bibinfo{author}{Bart \surnamestart Jacobs\surnameend} \& \bibinfo{author}{Sam
  \surnamestart Staton\surnameend} (\bibinfo{year}{2020}):
  \emph{\bibinfo{title}{De {{Finetti}}'s {{Construction}} as a {{Categorical
  Limit}}}}.
\newblock In \bibinfo{editor}{Daniela \surnamestart Petri{\c s}an\surnameend}
  \& \bibinfo{editor}{Jurriaan \surnamestart Rot\surnameend}, editors:
  {\slshape \bibinfo{booktitle}{Coalgebraic {{Methods}} in {{Computer
  Science}}}}, \bibinfo{series}{Lecture {{Notes}} in {{Computer Science}}},
  \bibinfo{publisher}{{Springer International Publishing}},
  \bibinfo{address}{{Cham}}, pp. \bibinfo{pages}{90--111},
  \doi{10.1007/978-3-030-57201-3\_6}.

\bibitemdeclare{book}{Landsman2017}
\bibitem{Landsman2017}
\bibinfo{author}{Klaas \surnamestart Landsman\surnameend}
  (\bibinfo{year}{2017}): \emph{\bibinfo{title}{{Foundations of Quantum
  Theory}}}.
\newblock {\slshape \bibinfo{series}{Fundamental Theories of Physics}}
  \bibinfo{volume}{188}, \bibinfo{publisher}{Springer International
  Publishing}, \bibinfo{address}{Cham}, \doi{10.1007/978-3-319-51777-3}.
\newblock \urlprefix\url{http://link.springer.com/10.1007/978-3-319-51777-3}.

\bibitemdeclare{book}{MacLane}
\bibitem{MacLane}
\bibinfo{author}{Saunders \surnamestart Mac~Lane\surnameend}
  (\bibinfo{year}{1971}): \emph{\bibinfo{title}{Categories for the {{Working
  Mathematician}}}}.
\newblock \bibinfo{volume}{5}, \bibinfo{publisher}{{Springer New York}},
  \bibinfo{address}{{New York, NY}}, \doi{10.1007/978-1-4612-9839-7}.

\bibitemdeclare{incollection}{MalherbeScottSelinger2013}
\bibitem{MalherbeScottSelinger2013}
\bibinfo{author}{Octavio \surnamestart Malherbe\surnameend},
  \bibinfo{author}{Philip \surnamestart Scott\surnameend} \&
  \bibinfo{author}{Peter \surnamestart Selinger\surnameend}
  (\bibinfo{year}{2013}): \emph{\bibinfo{title}{Presheaf Models of Quantum
  Computation: An Outline}}.
\newblock \bibinfo{volume}{7860}, pp. \bibinfo{pages}{178--194},
  \doi{10.1007/978-3-642-38164-5\_13}.
\newblock \eprint{1302.5652}.

\bibitemdeclare{book}{NielsenChuang2000}
\bibitem{NielsenChuang2000}
\bibinfo{author}{Michael~A. \surnamestart Nielsen\surnameend} \&
  \bibinfo{author}{Isaac~L. \surnamestart Chuang\surnameend}
  (\bibinfo{year}{2010}): \emph{\bibinfo{title}{Quantum Computation and Quantum
  Information}}.
\newblock \bibinfo{publisher}{{Cambridge University Press}},
  \bibinfo{address}{{Cambridge}}.

\bibitemdeclare{article}{Okada1978}
\bibitem{Okada1978}
\bibinfo{author}{Susumu \surnamestart Okada\surnameend} \&
  \bibinfo{author}{Yoshiaki \surnamestart Okazaki\surnameend}
  (\bibinfo{year}{1978}): \emph{\bibinfo{title}{{Projective limit of infinite
  Radon measures}}}.
\newblock {\slshape \bibinfo{journal}{Journal of the Australian Mathematical
  Society}} \bibinfo{volume}{25}(\bibinfo{number}{3}), pp.
  \bibinfo{pages}{328--331}, \doi{10.1017/S1446788700021078}.

\bibitemdeclare{unpublished}{Parzygnat2020}
\bibitem{Parzygnat2020}
\bibinfo{author}{Arthur~J. \surnamestart Parzygnat\surnameend}
  (\bibinfo{year}{2020}): \emph{\bibinfo{title}{{Inverses, disintegrations, and
  Bayesian inversion in quantum Markov categories}}}.
\newblock \doi{10.48550/arXiv.2001.08375}.

\bibitemdeclare{article}{Selby2021}
\bibitem{Selby2021}
\bibinfo{author}{John~H. \surnamestart Selby\surnameend},
  \bibinfo{author}{Carlo~Maria \surnamestart Scandolo\surnameend} \&
  \bibinfo{author}{Bob \surnamestart Coecke\surnameend} (\bibinfo{year}{2021}):
  \emph{\bibinfo{title}{Reconstructing Quantum Theory from Diagrammatic
  Postulates}}.
\newblock {\slshape \bibinfo{journal}{Quantum}} \bibinfo{volume}{5}, p.
  \bibinfo{pages}{445}, \doi{10.22331/q-2021-04-28-445}.
\newblock \eprint{1802.00367}.

\bibitemdeclare{article}{Stormer1969}
\bibitem{Stormer1969}
\bibinfo{author}{Erling \surnamestart St{\o}rmer\surnameend}
  (\bibinfo{year}{1969}): \emph{\bibinfo{title}{{Symmetric states of infinite
  tensor products of C*-algebras}}}.
\newblock {\slshape \bibinfo{journal}{Journal of Functional Analysis}}
  \bibinfo{volume}{3}(\bibinfo{number}{1}), pp. \bibinfo{pages}{48--68},
  \doi{10.1016/0022-1236(69)90050-0}.

\bibitemdeclare{article}{Tull2020}
\bibitem{Tull2020}
\bibinfo{author}{Sean \surnamestart Tull\surnameend} (\bibinfo{year}{2018}):
  \emph{\bibinfo{title}{A {{Categorical Reconstruction}} of {{Quantum
  Theory}}}}.
\newblock {\slshape \bibinfo{journal}{Logical Methods in Computer Science}}
  \bibinfo{volume}{16}(\bibinfo{number}{1}), pp. \bibinfo{pages}{1--4},
  \doi{10.23638/LMCS-16(1:4)2020}.
\newblock \eprint{1804.02265}.

\bibitemdeclare{article}{Wetering2019}
\bibitem{Wetering2019}
\bibinfo{author}{John \surnamestart van~de Wetering\surnameend}
  (\bibinfo{year}{2019}): \emph{\bibinfo{title}{An Effect-Theoretic
  Reconstruction of Quantum Theory}}.
\newblock {\slshape \bibinfo{journal}{Compositionality}}
  \bibinfo{volume}{1}(\bibinfo{number}{1}), p.~\bibinfo{pages}{1},
  \doi{10.32408/compositionality-1-1}.
\newblock \eprint{1801.05798}.

\end{thebibliography}

\end{document}